\theoremstyle{plain}
\newtheorem{theorem}{Theorem}
\newtheorem{lemma}{Lemma}
\newtheorem{proposition}{Proposition}
\newtheorem{condition}{Condition}
\theoremstyle{definition}
\newtheorem{assumption}{Assumption}
\theoremstyle{remark}
\newtheorem{remark}{Remark}
\newtheorem{example}{Example}
\newcommand{\indep}{\perp \!\!\! \perp}
\begin{document}
\title{Testing Identifying Assumptions in Parametric Separable Models: A Conditional Moment Inequality Approach}
\author{Leonard Goff \thanks{University of Calgary, Canada. Email address: \href{mailto: leonard.goff@ucalgary.ca}{leonard.goff@ucalgary.ca}},\ \ D\'{e}sir\'{e} K\'{e}dagni \thanks{University of North Carolina at Chapel Hill, the United States. Email address: \href{mailto: dkedagni@unc.edu}{dkedagni@unc.edu}},\ \ and
Huan Wu \thanks{University of North Carolina at Chapel Hill, the United States. Email address: \href{mailto: huan.wu@unc.edu}{huan.wu@unc.edu}}}
\date{First version:\thanks{We are grateful to Andrii Babii, Eric Mbakop, Adam Rosen, and Valentin Verdier for their helpful comments. We are also grateful to participants at the UNC-Chapel Hill Econometrics Workshop. All errors are ours.} \ March 9, 2023\\ \vspace{0.25cm} This version: \today}

\maketitle
\onehalfspacing


\begin{abstract}
    In this paper, we propose a simple method for testing identifying assumptions in parametric separable models, namely treatment exogeneity, instrument validity, and/or homoskedasticity. 
We show that the testable implications can be written in the intersection bounds framework, which is easy to implement using the inference method proposed in \cite{chernozhukov2013intersection}, and the Stata package of \cite{chernozhukov2015implementing}. Monte Carlo simulations confirm that our test is consistent and controls size. We use our proposed method to test the validity of some commonly used instrumental variables, such as the average price in other markets in \cite{nevo2012identification}, the Bartik instrument in \cite{card2009immigration}, and the test rejects both instrumental variable models. When the identifying assumptions are rejected, we discuss solutions that allow researchers to identify some causal parameters of interest after relaxing functional form assumptions. We show that the IV model is nontestable if no functional form assumption is made on the outcome equation, when there exists a one-to-one mapping between the continuous treatment variable, the instrument, and the first-stage unobserved heterogeneity.
\end{abstract}

 \maketitle
{\footnotesize \textbf{Keywords}: Separable models, functional form, OLS, IV, testable implications, relaxed assumptions.

\textbf{JEL subject classification}: C14, C31, C35, C36.}

\clearpage

\section{Introduction}\label{sec: intro}
Instrumental variable (IV) models are widely used in economics and related fields. In applied work, researchers often specify functional forms for the relationship between the outcome variable and the regressor of interest. The instrumental variable assumptions (exogeneity, exclusion restriction, and relevance) combined with functional form restrictions impose testable implications on the observed joint distribution of the outcome, the regressor, and the instrument.

Testing identifying assumptions is familiar in applied work. For example in the linear IV model, researchers usually test the relevance condition, which states that the covariance between the regressor and the instrument differs from zero. Accordingly, many papers in the literature propose solutions for how to deal with weak IVs (see for example \cite{staiger1997instrumental, wang1998inference, stock2002testing, hansen2014instrumental, andrews2019weak}, etc.). Although there has been some attention devoted to testing the exogeneity and exclusion restriction assumptions, applied researchers do not often report a test result for these assumptions. A possible explanation could be that the existing approaches are difficult to implement in practice.

In this paper, we propose an easy-to-implement testing procedure for the IV assumptions in a broad class of parametric separable models. To do so, we transform the testable conditional moment equality implication of the model into two conditional moment inequalities, as this is more conducive to inference given recent developments in the moment inequality literature \citep{AS2013, chernozhukov2013intersection}. The test can be implemented using existing Stata packages developed by \cite{chernozhukov2015implementing}. Our approach relies on a simple two-step procedure. We first identify the model parameters using the standard IV methods, and we then plug the identified coefficients in the outcome equation to back out the error term. Afterwards, we check the exogeneity condition by testing nonparametrically the joint conditions that i) the supremum of the conditional expectation of the error term given the instrument values is less than or equal to zero; and ii) its infimum is bigger than or equal to zero.

The test is asymptotically consistent as the IV estimator converges at a parametric rate, and the approximation error that comes from replacing the true coefficients with the estimated ones vanishes when the sample size goes to infinity. We show through simulations how the test controls the size asymptotically (sample size $\gtrapprox$ 3000) and the power converges to one for large samples (1000 or bigger). We illustrate the proposed test on two real-world empirical examples: the average price in other markets IV \citep{nevo2012identification}, and the Bartik IV \citep{card2009immigration}. 
The test rejects the validity of both IVs. Finally, we discuss alternative solutions that are available to researchers when a parametric IV model is rejected. A rejection of the IV model could be due to a violation of the IV assumptions or functional form misspecification. \cite{nevo2012identification}, \cite{Conley2012}, \cite{Masten2021}, among others have proposed alternative identification results when the IV is invalid. In this paper, we propose nonparametric identification when the IV is valid, but functional forms may be misspecified. Our approach is similar to \cite{imbens2009identification}. We relax all functional form assumptions for the outcome equation but assume that the relationship between the regressor and first-stage unobserved heterogeneity is continuous and strictly monotonic for all values of the instrument. We then point-identify the marginal response function for each value of the regressor and thus various marginal treatment effects.  

The testability of the instrumental variable model assumptions has been questioned over the years. Some researchers think the exogeneity and exclusion restriction assumptions are not testable because they relate the IV to an unobserved variable (potential outcomes or the error term in a linear model). However, a lot of progress has been made in recent years to elucidate when these IV assumptions can be testable or untestable. \cite{pearl1995testability} appears to be the first paper that derives a testable implication for the IV independence (exogeneity) and exclusion restriction assumptions when the endogenous regressor is discrete. \cite{kedagni2020generalized} added a new set of testable implications to those in \cite{pearl1995testability} and showed that their testable implications are sharp when the outcome and treatment are binary but the instrument is unrestricted.  \cite{gunsilius2021nontestability} proved that the instrumental variable independence assumption is not testable when the endogenous treatment is continuously distributed and there are no structural assumptions. However, most of the models that researchers consider in applied work impose some additional structures that help identify the parameters of interest. For example, in the local average treatment effect model, the IV independence assumption is often coupled with the monotonicity assumption, which together yield testable implications as discussed in \cite{kitagawa2015test}, \cite{mourifie2017testing}, and \cite{Huber2015TestingConstraints}. See also \cite{acerenza2023testing} in the context of bivariate probit models, \cite{Arai2022TestingDesigns}, and \cite{Hsu2023TestingDesigns} in the regression discontinuity design framework. All the above mentioned papers focus on nonseparable models. In this paper, we focus on testing the IV assumptions in parametric separable models.  

In the existing literature on testing identifying assumptions in parametric IV regression models, a conventional approach is to transform conditional moment restrictions to unconditional ones using weighting functions of the instrumental variable and then testing unconditional moment conditions (see the discussions in \cite{bierens1982consistent, bierens1990consistent}, \cite{newey1985maximum}, and related literature). Researchers should be cautious about choosing instrumental functions with this method since the tests could be inconsistent with most alternatives if improper instrumental functions are selected. Another standard method is to estimate the conditional moments with smoothing techniques, such as the kernel smoothed method (for example, \cite{hardle1993comparing}, \cite{horowitz2001adaptive} and related literature) and the smoothed empirical likelihood ratio method (see \cite{tripathi2003testing} and \cite{kitamura2004empirical}). 

Despite available methods, most applied papers do not implement them to test identifying assumptions in regression models. When using instrumental variable methods, most applied papers focus on justifying the validity of their instrument based on some intuition. This motivates us to propose an easy-to-implement method that allows researchers to check whether or not their model assumptions are compatible with the data, and propose a relaxed IV model that still offers identification if a test rejection is thought to arise from functional form misspecification rather than a failure of instrument validity. 

The remainder of the paper is organized as follows. Section \ref{sec: analytical_framework} presents the testable implications and the testing procedure in linear and nonlinear separable IV models. In Section \ref{sec: MonteCarlo}, we illustrate the power and the size of the proposed test through some Monte Carlo simulations. Section \ref{sec:id_relax_assumption} discusses results that relax the functional form assumption for the outcome equation. Section \ref{sec: empirical} presents the empirical illustrations and Section \ref{sec: futurework} concludes.

\section{Testable Implications in Parametric Separable Models}\label{sec: analytical_framework}

In this section, we describe our method of testing the identifying assumptions, including exogeneity, instrument validity, and homoskedasticity assumptions, in some commonly used structural models. Our method is mainly based on the intersection bounds framework. We first identify the model parameters under the imposed identifying assumptions and then write the unobserved error term as a known function of observed variables and identified parameters. We derive the testable implications in a set of conditional moment inequalities, so that we can use the conditional moment inequality method developed in \cite{chernozhukov2013intersection}.

\subsection{Simple Linear IV Model}
To give the general idea of our test, we begin with a simple example. Consider the basic linear model 
\begin{equation} \label{eq:linear_iv}
    Y = \beta_0 + \beta_1 X +U,
\end{equation}
where $Y \in \mathcal{Y}$ is an observed outcome variable with $\mathbb{E}|Y| < \infty$, $X \in \mathcal{X}$ is a scalar observed potentially endogeneous covariate, and $U$ is an unobserved error term. Equation \eqref{eq:linear_iv} could, for example, represent a treatment effect model in which the effect $\beta_1$ of increasing $X$ by one unit is common to all individuals and constant across levels of $X$.

Suppose that researchers propose an instrument variable $Z \in \mathcal{Z}$ to identify the parameters $(\beta_0, \beta_1)$, with $Z$ satisfying the following assumptions:

\begin{assumption}[Exogeneity] \label{ass:mean_indep}
    $\mathbb{E}[U \mid Z]=0$ almost surely.
\end{assumption}

\begin{assumption}[Relevance] \label{ass:relevance}
    $\operatorname{Cov}(X, Z) \neq 0$.
\end{assumption}
\noindent Note that a third assumption implicit in interpreting Equation \eqref{eq:linear_iv} causally is the so-called \textit{exclusion restriction} that $Z$ does not have a direct effect on $Y$, only affecting it through $X$.\\ 
 
Under the model structure Equation (\ref{eq:linear_iv}) and Assumptions \ref{ass:mean_indep} - \ref{ass:relevance}, one can point identify $(\beta_0,\beta_1)$ as  
\begin{equation*}
    \beta_1=\frac{\operatorname{Cov}(Y, Z)}{\operatorname{Cov}(X, Z)}, \quad \beta_0=\mathbb{E}[Y]-\beta_1 \mathbb{E}[X].
\end{equation*}
Importantly, we can then express $U$ as a function of the data by substituting in the identified parameters, i.e., $U=Y - \mathbb{E}[Y] - (X - \mathbb{E}[X]) \operatorname{Cov}(Y, Z) / \operatorname{Cov}(X, Z)$. Combining this with Assumption \ref{ass:mean_indep}, we obtain the testable implication that 
\begin{equation} \label{eq:implication}
        \mathbb{E}\left[\left.(Y - \mathbb{E}[Y]) - \frac{\operatorname{Cov}(Y, Z)}{\operatorname{Cov}(X, Z)}(X - \mathbb{E}[X]) \right| Z = z\right] = 0,
\end{equation}
for all $z \in \mathcal{Z}$.

The question we may ask at this point is whether the implication \eqref{eq:implication} can be rejected by the data. If $Z$ is binary, then implication \eqref{eq:implication} holds automatically and cannot be rejected.\footnote{This can be shown directly by noting that when $Z$ is binary $\frac{\operatorname{Cov}(Y, Z)}{\operatorname{Cov}(X, Z)} = \frac{\mathbb{E}[Y|Z=1]-\mathbb{E}[Y|Z=0]}{\mathbb{E}[X|Z=1]-\mathbb{E}[X|Z=0]}$, and applying the law of iterated expectations over $Z$ for $\mathbb{E}[Y]$ and $\mathbb{E}[X]$.} 

By contrast, if there are at least three elements in the support of $Z$, the condition \eqref{eq:implication} may fail to hold. To see this, suppose that the support of $Z$ is equal to $\{0,1,2\}$. We can use two conditions (e.g., $\mathbb E[U\vert Z=0]=0$ and $\mathbb E[U\vert Z=1]=0$) to identify $\beta_0$ and $\beta_1$, and use a third condition (e.g., $\mathbb E[U\vert Z=2]=0$) to check whether the model is rejected or not. Intuitively, testability arises from the linearity assumption inherent in \eqref{eq:linear_iv}, which only provides over-identification given information provided by more than two moments.

The above discussion leads to the following proposition.

\begin{proposition}\label{prop1}
Consider the model specification \eqref{eq:linear_iv} together with Assumptions \ref{ass:mean_indep} - \ref{ass:relevance}. Then, the testable implication \eqref{eq:implication} holds and is sharp. On the other hand, whenever condition \eqref{eq:implication} holds, there exist a random vector $(X^*,Y^*,U^*,Z)$ and a parameter vector $(\beta_0^*,\beta^*_1)$ s.t. $Y^*=\beta^*_0+\beta^*_1 X^* + U^*$, $(X^*,Y^*,Z)$ has the same distribution as $(X,Y,Z)$, and $E[U^*\vert Z]\neq 0$.
\end{proposition}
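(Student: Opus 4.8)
The plan is to establish three things in sequence: that \eqref{eq:implication} is implied by the model, that it is the \emph{only} restriction the model places on the law of $(X,Y,Z)$ (sharpness), and finally that satisfying \eqref{eq:implication} does not by itself force Assumption \ref{ass:mean_indep} to hold in every separable representation of the observed law. For the first, I would simply retrace the derivation given just above the proposition: Assumption \ref{ass:relevance} makes $\operatorname{Cov}(X,Z)\neq 0$, so the IV formulas for $(\beta_0,\beta_1)$ are well defined; substituting them into $U=Y-\beta_0-\beta_1 X$ writes $U$ as the observable functional appearing inside the conditional expectation in \eqref{eq:implication}; and Assumption \ref{ass:mean_indep} then says that conditional expectation vanishes almost surely.

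For sharpness, I would run the argument in reverse. Take any law of $(X,Y,Z)$ with $\operatorname{Cov}(X,Z)\neq 0$ that satisfies \eqref{eq:implication}, keep $Z$ as is, set $(X^*,Y^*):=(X,Y)$, and define $\beta_1^*:=\operatorname{Cov}(Y,Z)/\operatorname{Cov}(X,Z)$, $\beta_0^*:=\mathbb{E}[Y]-\beta_1^*\mathbb{E}[X]$, and $U^*:=Y^*-\beta_0^*-\beta_1^* X^*$. Then $Y^*=\beta_0^*+\beta_1^* X^*+U^*$ by construction and Assumption \ref{ass:relevance} holds by the choice of the law, so it only remains to verify Assumption \ref{ass:mean_indep}. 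Here \eqref{eq:implication} is exactly what is needed: it says $\mathbb{E}[(Y-\mathbb{E}[Y])-\beta_1^*(X-\mathbb{E}[X])\mid Z]=0$, i.e.\ $\mathbb{E}[Y\mid Z]-\beta_1^*\mathbb{E}[X\mid Z]$ equals the constant $\mathbb{E}[Y]-\beta_1^*\mathbb{E}[X]=\beta_0^*$, hence $\mathbb{E}[U^*\mid Z]=0$. Thus every law satisfying \eqref{eq:implication} is generated by an admissible structure, which together with the first step gives sharpness.

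For the last claim, I would exhibit a concrete alternative parametrization of the \emph{same} observed law: keep $(X^*,Y^*,Z)=(X,Y,Z)$ but pick any $\beta_1^*\neq\beta_1$ (and, say, $\beta_0^*=\beta_0$) and let $U^*:=Y^*-\beta_0^*-\beta_1^* X^*$, so that $Y^*=\beta_0^*+\beta_1^* X^*+U^*$ still holds and the distribution of $(X^*,Y^*,Z)$ is unchanged. Since $U^*=U+(\beta_0-\beta_0^*)+(\beta_1-\beta_1^*)X$ with the true $U$ satisfying $\mathbb{E}[U\mid Z]=0$, we get $\mathbb{E}[U^*\mid Z]=(\beta_0-\beta_0^*)+(\beta_1-\beta_1^*)\mathbb{E}[X\mid Z]$, and this is not almost surely zero because $\mathbb{E}[X\mid Z]$ cannot be almost surely constant under Assumption \ref{ass:relevance} (constancy would force $\operatorname{Cov}(X,Z)=0$). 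Hence $\mathbb{E}[U^*\mid Z]\neq 0$, as claimed.

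I do not expect a substantive obstacle: every step is a direct computation, and the ``converse'' part is essentially trivial once one notices that the separable representation of a fixed law is far from unique. The points that need care are stating precisely what ``sharp'' means (the observable content of the model equals \eqref{eq:implication}), carrying the standing condition $\operatorname{Cov}(X,Z)\neq 0$ so that the IV map is well defined throughout, and reading ``$\mathbb{E}[U^*\mid Z]\neq 0$'' as the negation of Assumption \ref{ass:mean_indep} (i.e.\ not identically zero), which is exactly what the perturbation in the last step delivers.
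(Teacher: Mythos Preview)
Your proposal is correct and follows essentially the same route as the paper. The paper proves sharpness by the identical reconstruction $(\tilde\beta_0,\tilde\beta_1)=(\mathbb{E}[Y]-\tilde\beta_1\mathbb{E}[X],\operatorname{Cov}(Y,Z)/\operatorname{Cov}(X,Z))$, and for the nonverifiability claim it takes the specific perturbation $\beta_1^*=\operatorname{Cov}(Y,Z)/\operatorname{Cov}(X,Z)+2$ (yielding $\mathbb{E}[U^*\mid Z]=-2\mathbb{E}[X\mid Z]$) rather than your generic $\beta_1^*\neq\beta_1$; your phrasing via ``$\mathbb{E}[X\mid Z]$ cannot be a.s.\ constant under Assumption~\ref{ass:relevance}'' is in fact slightly cleaner than the paper's ``$\mathbb{E}[X\mid Z]\neq 0$ because $\operatorname{Cov}(X,Z)\neq 0$,'' but the substance is the same.
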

\begin{proof}
    See Appendix \ref{app:prop1}.
\end{proof}

\noindent \begin{remark} Proposition \ref{prop1} shows that the standard linear IV model can generally be tested using the implication in Equation \eqref{eq:implication}. However, this is not the same as saying that the model is \textit{confirmed} when the implication holds.
\end{remark}

\noindent As more assumptions are added to the model, it may be possible to extend its testable impications. For example, one could add more structure to the above IV model by assuming homoskedasticity. 
\begin{assumption}[Homoskedasticity] \label{ass:ivhomo}
    $\mathbb{E}\left[U^2 \mid Z\right]=\sigma_0^2 < \infty$ almost surely.
\end{assumption} 
\noindent Under Assumption \ref{ass:ivhomo}, we identify $\sigma^2_0$ as $\mathbb{E}\left[\left((Y - \mathbb{E}[Y]) - (X - \mathbb{E}[X]) \operatorname{Cov}(Y, Z) / \operatorname{Cov}(X, Z) \right)^2\right]$. Then, a testable implication for model Equation (\ref{eq:linear_iv}) and Assumptions \ref{ass:mean_indep} - \ref{ass:ivhomo} is
\begin{equation}
    \begin{aligned} \label{eq:implication2}
        & \mathbb{E}\left[(Y - \mathbb{E}[Y]) - \frac{\operatorname{Cov}(Y, Z)}{\operatorname{Cov}(X, Z)}(X - \mathbb{E}[X]) \mid Z = z\right] = 0,\\
        & \mathbb{E}\bigg[\left((Y - \mathbb{E}[Y]) - \frac{\operatorname{Cov}(Y, Z)}{\operatorname{Cov}(X, Z)}(X - \mathbb{E}[X])\right)^2\\
        & \qquad \qquad \qquad -\mathbb{E}\left[\left((Y - \mathbb{E}[Y]) - \frac{\operatorname{Cov}(Y, Z)}{\operatorname{Cov}(X, Z)}(X - \mathbb{E}[X])\right)^2\right] \mid Z =z\bigg] = 0,
 \end{aligned}
\end{equation}
for all $z \in \mathcal{Z}$.\\

\noindent \begin{remark} All the results derived in this section also hold if $Z$ is replaced by $X$, which implies that the simple linear regression model \eqref{eq:linear_iv} combined with the exogeneity $(\mathbb E[U\vert X]=0)$, relevance $(Var(X)>0)$ and/or homoskedasticity $(\mathbb E[U^2\vert X]=\sigma^2_0)$ are also testable.
\end{remark}

\noindent Before extending the basic idea to more general classes of models, we in the next section propose how to implement a practical test of the moment equalities \eqref{eq:implication} and \eqref{eq:implication2} by re-expressing them as a set of moment inequalities.


%
%
%

\subsection{Testing Procedure}
We propose an easy-to-implement approach based on conditional moment inequalities using the intersection bound framework, which does not rely on a choice of instrumental functions to transform them into unconditional moments. Indeed, the implication \eqref{eq:implication} is equivalent to the following.
\begin{equation} \label{eq:implication3}
    \begin{aligned} 
      & \sup _{z \in \mathcal{Z}} \mathbb{E}\left[(Y - \mathbb{E}[Y]) - \frac{\operatorname{Cov}(Y, Z)}{\operatorname{Cov}(X, Z)}(X - \mathbb{E}[X]) \mid Z=z\right]  \leq 0, \\
       & \sup _{z \in \mathcal{Z}} \mathbb{E}\left[-(Y - \mathbb{E}[Y]) + \frac{\operatorname{Cov}(Y, Z)}{\operatorname{Cov}(X, Z)}(X - \mathbb{E}[X]) \mid Z=z\right]  \leq 0.
    \end{aligned}
\end{equation} 
The representation of the testable implication in Equation \eqref{eq:implication3} is now easier to test thanks to the recent advances on testing conditional moment inequalities. 
More precisely, \cite{chernozhukov2013intersection} propose a method that allows us to test a set of conditional moment inequalities in the form of Equation \eqref{eq:implication3}, which are often called intersection bounds. 

To test inequalities in condition \eqref{eq:implication3}, we will first plug in the $\sqrt{n}$-consistent estimators $(\hat{\beta}_0,\hat{\beta}_1)$ for $(\beta_0,\beta_1)$. To simplify notation, we define $W_1 = Y - \hat{\beta}_0-\hat{\beta}_1 X$ and $W_2 = - Y + \hat{\beta}_0+\hat{\beta}_1 X$, and a set $\mathcal{V}=\{(z, j): z \in \mathcal{Z}, j \in\{1,2\}\}$. Then, we denote  $\theta(v) \equiv \mathbb{E}\left[W_j \mid Z=z\right]$ for $v=(z, j)$. 
We focus on testing the hypothesis
\begin{equation}
    H_0: \theta_0 \equiv \sup _{v \in \mathcal{V}} \theta(v) \leq 0 \quad \text { v.s. } \quad H_1: \theta_0>0.
    \label{eq:hypo:inequalities}
\end{equation}
Second, to test the null hypothesis $H_0$ against the alternative $H_1$ in Equation \eqref{eq:hypo:inequalities}, we need to estimate the supremum statistic $\theta_0$, which requires estimating the conditional moment $\theta(v)$ for each $v$. To eliminate the first step plug-in estimation bias asymptotically, we recommend estimating the conditional moment $\theta(v)$ nonparametrically. Letting $\hat{\theta}(v)$ denote an estimator for $\theta(v)$, we show in Appendix \ref{app:validity_plug_in} that all asymptotic properties are preserved if we then plug in a $\sqrt{n}$-consistent estimator $(\hat{\beta}_0, \hat{\beta}_1)$ for $(\beta_0, \beta_1)$ in the first step. If we simply take supremum over $\hat{\theta}(v)$ and use it as an estimator for $\theta(v)$, we would have a finite-sample bias because of estimation errors. To correct this bias, \cite{chernozhukov2013intersection} propose a precision-corrected estimator\footnote{See Appendix \ref{app:precision-correct} for how to choose a precision-corrected estimator.} for $\theta_0$,
\begin{equation} \label{sup_estimator}
    \hat{\theta}_{1-\alpha} \equiv \sup _{v \in \mathcal{V}}\left\{\hat{\theta}(v)-k_{1-\alpha} \hat{s}(v)\right\}.
\end{equation}
The precision correction term is $k_{1-\alpha} \hat{s}(v)$. Here, $\hat{s}(v)$ is the standard error of $\hat{\theta}(v)$, $k_{1-\alpha}$ is $(1-\alpha)$-quantile of an approximated distribution of $\sup _{v \in \mathcal{V}} \{(\hat{\theta}(v)-\theta(v))/ \sigma(v)\}$, $\sigma(v)$ is the standard deviation of $\hat{\theta}(v)$. Heuristically speaking, the precision correction term adjusts the estimator with estimation error uniformly on $\mathcal{V}$.

The estimator can be used as a test statistic for testing the hypothesis (\ref{eq:hypo:inequalities}). We reject the null hypothesis at the significance level $\alpha$ if $\hat{\theta}_{1-\alpha} > 0$. We use the \texttt{clrtest/clrbound} Stata commands from \cite{chernozhukov2015implementing} to test the hypothesis in (\ref{eq:hypo:inequalities}).

\subsection{Extension to General Parametric Separable Models}
\subsubsection{Linear models with multiple regressors}\label{sec:multilinear}
Consider a linear IV model with multiple (potentially endogenous) regressors where $X$ is a vector instead of a scalar:
\begin{equation} \label{model:ols}
    Y=X'\beta+U.
\end{equation}
Let $Z=(Z_1, Z_2, \ldots, Z_k)'$ be a candidate IV for $X$. To identify and estimate the parameters $\beta$ in Equation (\ref{model:ols}), researchers usually impose the following assumptions.

\begin{assumption}(IV conditions) \label{ass:ivols}
    \begin{enumerate}
        \item (Exogeneity assumption) $\mathbb{E}[U \mid Z]=0$ almost surely.
        \item (Rank condition) The matrix $\mathbb E[ZX']$ is invertible.
    \end{enumerate}
\end{assumption}

Under the IV conditions in Assumption \eqref{ass:ivols},
\begin{equation*}
    \beta= \mathbb E[ZX']^{-1}\mathbb E[ZY].
\end{equation*}
As a result, $U=Y-X'\beta=Y-X'\mathbb E[ZX']^{-1}\mathbb E[ZY]$ where each term in $U$ is observable or identified from the data. Under the exogeneity condition in Assumption \ref{ass:ivols},
\begin{equation}
    \mathbb{E}[U \mid Z] = \mathbb{E}\left[Y-X'\mathbb E[ZX']^{-1}\mathbb E[ZY] \mid Z=z\right] =0\ \text{ for all } z,
    \label{test:ols1}
\end{equation}
which is equivalent to a set of joint conditional moment inequalities in Equation (\ref{test:ols3}) and consistent with the framework of intersection bounds.

\begin{proposition}[Testable implications of linear IV models] \label{thm:ols}
    Consider the linear model in Eq. \eqref{model:ols}. Then, testable implications of Assumption \ref{ass:ivols} are given by
    \begin{equation}
        \begin{aligned}
            & \sup_{z\in\mathcal{Z}} \mathbb E \left[Y- X' \beta^* \mid Z=z\right] \leq 0,\\
            & \sup_{z\in\mathcal{Z}} \mathbb E \left[-Y+X' \beta^* \mid Z=z\right] \leq 0,
        \end{aligned}
        \label{test:ols3}
    \end{equation}
    where $\beta^*=\mathbb E [ZX']^{-1} \mathbb E [ZY]$ is identified under Assumption \ref{ass:ivols}.
\end{proposition}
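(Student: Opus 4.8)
The plan is to mirror the argument already given for the scalar case in Proposition \ref{prop1}, now working with the vector $X$ and the matrix moment conditions. First I would show necessity: assuming the model \eqref{model:ols} holds with some $\beta$ satisfying Assumption \ref{ass:ivols}, the rank condition makes $\beta^* = \mathbb{E}[ZX']^{-1}\mathbb{E}[ZY]$ well-defined, and plugging $Y = X'\beta + U$ in gives $\beta^* = \beta + \mathbb{E}[ZX']^{-1}\mathbb{E}[ZU]$; the exogeneity condition $\mathbb{E}[U\mid Z]=0$ forces $\mathbb{E}[ZU]=0$ by the law of iterated expectations, so $\beta^* = \beta$. Hence $Y - X'\beta^* = U$, and $\mathbb{E}[Y - X'\beta^* \mid Z = z] = \mathbb{E}[U \mid Z = z] = 0$ for all $z$, which immediately yields \eqref{test:ols3} since a quantity that is identically zero has nonpositive supremum in both signs. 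This is the routine direction.

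Next I would address sharpness, i.e., that \eqref{test:ols3} exhausts the model's empirical content. Given the joint distribution of $(X,Y,Z)$ such that \eqref{test:ols3} holds — equivalently, such that $\mathbb{E}[Y - X'\beta^* \mid Z = z] = 0$ for all $z$ with $\beta^*$ as defined — I would exhibit a data-generating process consistent with the model. The natural construction is to set $U^* \equiv Y - X'\beta^*$ and keep $(X,Z)$ and their joint law unchanged; then by construction $Y = X'\beta^* + U^*$, the observable distribution of $(X,Y,Z)$ is replicated exactly, and $\mathbb{E}[U^*\mid Z=z] = \mathbb{E}[Y - X'\beta^*\mid Z=z] = 0$ a.s., so the exogeneity part of Assumption \ref{ass:ivols} holds. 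The rank condition is a statement purely about the observed distribution of $(X,Z)$ and so is either inherited or needs to be assumed as part of the maintained hypothesis; I would note this explicitly, as in the scalar treatment where relevance/rank is treated as a maintained condition rather than a testable one.

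Finally, as in Proposition \ref{prop1}, I would argue that the implication can genuinely bind — that when the dimension of $Z$ is large enough relative to that of $X$, there are distributions satisfying the rank condition for which $\mathbb{E}[Y - X'\beta^*\mid Z=z] \neq 0$ for some $z$, so the test is not vacuous. The mechanism is the same over-identification intuition already invoked: $\beta^*$ solves a particular (IV/2SLS-type) projection using the full vector $Z$, but $\mathbb{E}[Y - X'\beta^*\mid Z=z]$ being zero \emph{for every value} $z$ is a stronger, infinite-dimensional restriction. A short explicit example with, say, $\dim X = 1$ and $\mathcal{Z}$ having three points suffices. The main obstacle I anticipate is not any single calculation but being careful about what is being claimed as ``sharp'': I would want to state precisely that \eqref{test:ols3} characterizes the observable distributions compatible with \eqref{model:ols} and Assumption \ref{ass:ivols} conditional on the rank condition holding in the data, and to flag that — exactly as in the scalar case and the accompanying remark — non-rejection does not \emph{confirm} the model, only that no contradiction is detected.
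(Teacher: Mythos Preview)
Your proposal is correct and follows essentially the same approach as the paper, which explicitly states that the proof of Proposition~\ref{thm:ols} is ``similar to that of Proposition~\ref{prop1} and is therefore omitted.'' Your necessity argument (showing $\beta^*=\beta$ under the maintained assumptions so that $Y-X'\beta^*=U$), your sharpness construction (setting $U^*=Y-X'\beta^*$ and verifying that the implied DGP replicates the observed law and satisfies exogeneity), and your remark that the rank condition is a maintained feature of the observable distribution all mirror the scalar argument in Appendix~\ref{app:prop1}.
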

The testable implications in Equation \eqref{test:ols3} are sharp, and the proof is similar to that of Proposition \ref{prop1} and is therefore omitted.

We can apply the same methods to test other identifying assumptions that are commonly used in the literature on linear IV models, for example, the homoskedasticity assumption $\mathbb E[U^2 \mid Z]=\sigma^2$. 
The testable implications for the linear IV model in Equation \eqref{model:ols} and Assumption \ref{ass:ivols} coupled with the homoskedasticity assumption are
    \begin{equation}
    \begin{aligned}
        & \sup _{Z \in \mathcal{Z}} \mathbb{E}\left[Y-X' \beta^* \mid Z=z\right] \leq 0, \\
        & \sup _{z \in \mathcal{Z}} \mathbb{E}\left[-Y+X' \beta^* \mid Z=z\right] \leq 0, \\
        & \sup _{z \in \mathcal{Z}} \mathbb{E}\left[\left(Y-X' \beta^*\right)^2-{\sigma^*}^2 \mid Z=z\right] \leq 0, \\
        & \sup _{z \in \mathcal{Z}} \mathbb{E}\left[-\left(Y-X' \beta^*\right)^2+{\sigma^*}^2 \mid Z=z\right] \leq 0.
    \end{aligned}
    \label{test:olshomo}
    \end{equation}


\subsubsection{Nonlinear parametric separable models}
Consider a more general model where $Y = m(X, \theta_0) + U$, where $\mathbb{E}\left(U \mid Z\right) = 0$, $m(\cdot, \theta)$ defined on $\mathbb{R}^{k} \times \Theta$ is a known function, and $\Theta$ is a parameter space. This model can generally be tested without assuming that there is a unique parameter $\theta_0$ such that $\mathbb E[Y-m(X,\theta_0) \mid Z]=0$. Instead, we can characterize the identified set for $\theta_0$ as:  
\begin{eqnarray*}
    \Theta_{I}=\bigg\{ \theta \in \Theta: \sup_{z \in \mathcal Z} \mathbb E[Y-m(X,\theta) \mid Z=z] \leq 0\ \text{ and }\ \sup_{z\in \mathcal Z} \mathbb E[-Y+m(X,\theta) \mid Z=z] \leq 0 \bigg\}.
\end{eqnarray*}
The above identified set can be computed through a grid search using the intersection bounds framework by \cite{chernozhukov2013intersection}. As a byproduct, if the identified set $\Theta_I$ is empty, then the model is rejected. 

One can add other commonly used assumptions such as monotonicity of $m(\cdot,\theta)$ in $\theta$ for all values of $X$ (to ensure uniqueness of $\theta_0$), homoskedasticity, etc. to tighten the identified set.

\begin{example}[Box-Cox regression model] For any $x > 0$, define
\begin{equation*}
    x^{(\lambda)} \equiv \begin{cases}\frac{x^\lambda-1}{\lambda}, & \text { if } \lambda \neq 0 \\ \log (x), & \text { if } \lambda=0\end{cases}
\end{equation*}
and the Box-Cox model is specified as
\begin{equation}
    Y=\beta_0+\beta_1 X^{(\lambda)}+U.
    \label{model:boxcox}
\end{equation}

In the Box-Cox model, the nonlinear function $m(\cdot)$ takes the form $m(X, \theta)=\beta_0+\beta_1 X^{(\lambda)}$, and the parameter $\mathbb{\theta} = \left(\beta_0, \beta_1, \lambda\right)$.
\end{example}

\begin{example}[CES Function]
Another typical example discussed in \cite{hansen2022econometrics} is the constant elasticity of substitution (CES) function, which is a generalization of the Cobb-Douglas production function. Researchers often assume the CES structure to estimate production functions. The CES function with two inputs, $X_1, X_2$, is
\begin{equation*}
    Y=\left\{\begin{array}{cc}
        A\left(\alpha X_1^\lambda+(1-\alpha) X_2^\lambda\right)^{\nu / \lambda}, & \text { if } \lambda \neq 0 \\
        A\left(X_1^\alpha X_2^{(1-\alpha)}\right)^\nu, & \text { if } \lambda=0
    \end{array}\right.
\end{equation*}
We can take the logarithm of Y and set $\log A = \beta_0 + U$, where $U$ represents the unobserved productivity. Then the CES function implies the following nonlinear model,
\begin{equation}
    \log Y=\left\{\begin{array}{cl}
        \beta_0+\frac{\nu}{\lambda} \log \left(\alpha X_1^\lambda+(1-\alpha) X_2^\lambda\right)+U, & \text { if } \lambda \neq 0 \\
        \beta_0 + \nu \left(\alpha \log X_1 + (1-\alpha) \log X_2\right) + U, & \text { if } \lambda=0,
    \end{array}\right.  
    \label{model:ces}
\end{equation}
where the parameter $\mathbb{\theta} = (\lambda, \nu, \alpha, \beta_0)$, and the function $m(X, \mathbb{\theta})$ is nonlinear in $\mathbb{\theta}$.
\end{example}

\subsubsection{Nonlinear parametric nonseparable models}
Although we focus on parametric separable models in this paper, we explain in Example~\ref{ex:probit} below how our proposed approach could be extended to some nonseparable models such as probit models.

\begin{example}[Probit with endogeneity]\label{ex:probit}
Consider the following parametric model
\begin{eqnarray*}
\left\{\begin{array}{clc}
        Y &=& \mathbbm{1}\{X\beta -U \geq 0\}, \\
        X &=& Z \delta + V,
    \end{array}\right.
\end{eqnarray*}
    where $(U,V)'$ is jointly normally distributed with coefficient of correlation $\rho$, and $Z \indep (U,V)$. We can write $U=\rho V + \varepsilon$, where $\varepsilon \indep (V,Z),$ and $\varepsilon \sim N(0,1-\rho^2)$. The coefficients $(\beta, \delta, \rho)$ are identified and can be estimated through the maximum likelihood method. We can write
    \begin{eqnarray*}
        \mathbb E[Y|X=x, Z=z] &=& \mathbb P(U \leq X \beta | X=x, Z=z),\\
        &=& \mathbb P(\rho V +\varepsilon \leq X \beta | X=x, Z=z),\\
        &=& \mathbb P(\varepsilon \leq x \beta - \rho(x-z \delta) | X=x, Z=z),\\
        &=& \Phi\left(x\frac{\beta-\rho}{\sqrt{1-\rho^2}}+z \frac{\rho \delta}{\sqrt{1-\rho^2}}\right),
    \end{eqnarray*}
    which implies $ \mathbb E\left[Y-\Phi\left(X(\beta-\rho) /\sqrt{1-\rho^2 }+Z \rho \delta / \sqrt{1-\rho^2}\right) | X=x, Z=z\right]=0$ for all $(x,z)$. Since the parameters $(\beta, \delta, \rho)$ are identified, this latter implication is testable.     
\end{example}

This example above can be generalized in the following way. Suppose the observed vector $(Y,X,Z)$ satisfies
\begin{eqnarray*}
\left\{\begin{array}{clc}
        Y &=& g(X,\beta,U), \\
        X &=& h(Z,\delta) + V,
    \end{array}\right.
\end{eqnarray*}
where $g$ and $h$ are known functions, $U=\rho V+\varepsilon,$ $\varepsilon \indep V$, and $\varepsilon | X,Z \sim F_{\varepsilon}$ (known). This model is parametric and is separable in the ``first-stage'' equation for $X$, even though it is nonseparable in the outcome equation.

Suppose also that $(\beta, \delta, \rho)$ are identified. Then, we have a generalized version of the testable implication 
$$\mathbb E\left[Y-\int g(X,\beta, e+\rho(X-h(Z,\delta))d F_{\varepsilon}(e) | X=x,Z=z\right]=0,$$ which can be converted into conditional moment inequalities
\begin{eqnarray*}
\sup_{x,z}\mathbb E\left[Y-\int g(X,\beta, e+\rho(X-h(Z,\delta))d F_{\varepsilon}(e) | X=x,Z=z\right] &\leq& 0,\\
\sup_{x,z}\mathbb E\left[-Y+\int g(X,\beta, e+\rho(X-h(Z,\delta))d F_{\varepsilon}(e) | X=x,Z=z\right] &\leq& 0.
\end{eqnarray*}

\subsubsection{Semiparametric linear single-index models}
Our proposed approach can also be extended to binary response models that make no distributional assumptions on the error term, but restrict the dependence on covariates to take a parametric form such as a linear single-index model:

\begin{example}[Binary response with linear index]\label{ex:linearindex}
Consider the model
$$Y = \mathbbm{1}\{X'\beta -U > 0\},$$
\end{example}
\noindent where $X \indep U$ but the distribution $G(u)$ of $U$ is left unrestricted. As pointed out by \citet{ichimura1993}, identification of $\beta$ in this model is possible up to a normalization provided some regularity conditions hold, for example, that $G$ is differentiable and at least one regressor is continuously distributed.

Note that this model implies that
$$\mathbb{E}[Y|X=x] = \mathbb{P}(U < x'\beta) = G(x'\beta),$$
which carries the testable restriction that for any $x_1$ and $x_2$ such that $x_1'\beta=x_2'\beta$, $\mathbb{E}[Y|X=x_1]=\mathbb{E}[Y|X=x_2]$. If the support of $X$ is a known subset $\mathcal{X} \in \mathbb{R}^k$, we can therefore test this model through the equality restriction that
\begin{equation} \label{eq:linindexrestriction}
    \sup_{x_1,x_2 \in \mathcal{X}} \left|\mathbb{E}[Y|X=x_1]-\mathbb{E}[Y|X=x_2]\right|\cdot \mathbbm{1}(x_1'b = x_2'b) 
    = 0.
\end{equation}
We know that when $b=\beta^*$ the inner supremum is equal to zero, where $\beta^*$ is the true value of $\beta$. To bring the above into the intersection bounds framework, make use of the fact that $\beta^*$ is identified, and rewrite \eqref{eq:linindexrestriction} as
$$\sup_{(x_1,x_2) \in \mathcal{V}(\beta^*)} \left\{\mathbb{E}[Y|X=x_1]-\mathbb{E}[Y|X=x_2]\right\} \le 0,$$
$$\sup_{(x_1,x_2) \in \mathcal{V}(\beta^*)} \left\{\mathbb{E}[Y|X=x_2]-\mathbb{E}[Y|X=x_1]\right\} \le 0,$$
i.e., that $\mathbb{E}[Y|X=x_1]-\mathbb{E}[Y|X=x_2]=0$ for all $(x_1,x_2) \in \mathcal{V}(b)$, where
we define $\mathcal{V}(b) := \{(x_1,x_2) \in \mathcal{X} \times \mathcal{X}: x_1'b = x_2'b\}$.
The results of \citet{chernozhukov2013intersection} require the set over which the supremum or infimum is taken across be compact, so we must maintain this assumption here. A sufficient condition is that $\mathcal{X}$ be compact in $\mathbb{R}^k$. 

We illustrate how sample splitting can help transform the inequalities above into conditional moment inequalities. Suppose we have an i.i.d. sample $\{Y_i,X_i\}_{i=1}^{n}$, and we randomly split the sample into two sub-samples $\{Y_i^{(1)},X_i^{(1)}\}_{i=1}^{n_1}$ and $\{Y_i^{(2)},X_i^{(2)}\}_{i=n_1+1}^{n}$. Then the inequalities above are equivalent to the following:
$$\sup_{(x_1,x_2) \in \mathcal{V}(\beta^*)} \left\{\mathbb{E}[Y^{(1)}-Y^{(2)}|X^{(1)}=x_1, X^{(2)}=x_2]\right\} \le 0,$$
$$\sup_{(x_1,x_2) \in \mathcal{V}(\beta^*)} \left\{\mathbb{E}[Y^{(2)}-Y^{(1)}|X^{(1)}=x_1, X^{(2)}=x_2]\right\} \le 0.$$

\section{Monte Carlo Simulations}\label{sec: MonteCarlo}

\subsection{Size of the test}

First, we generate a linear model with an instrument variable that satisfies Assumption \ref{ass:mean_indep}. The model takes the form
\begin{equation}
    \left\{ \begin{array}{lcl}
     Y_i &=&  \beta_0 + \beta_1 X_i +U_i \\
     X_i &=& \gamma_0 + \gamma_1 Z_i + V_i
     \end{array} \right.
     \label{simu:sizeolsiv}
\end{equation}
where $\beta_0 = 0$, $\beta_1 = 2$, $\gamma_0 = 0$ and $\gamma_1 = 3$. We have done $500$ Monte Carlo replications. For each replication, we set the sample size to be $n$. 
We randomly draw $X_i \stackrel{\text { i.i.d. }}{\sim} \mathcal{U}[-3, 3]$, $(U_i, V_i)$ i.i.d. from $\mathcal{N}(0, \Sigma)$ with $\Sigma=(1, 0.5; 0.5, 2)$, and draw $Z_i \stackrel{\text { i.i.d. }}{\sim} \mathcal{U}[-3, 3]$ independent with $(U_i, V_i)$, $i = 1, \dots, n$. Table \ref{tab:sizeolsiv} presents the simulation results of testing the above DGP under Assumption \ref{ass:mean_indep}.
\begin{table}[!ht] 
    \centering
    \caption{Rejection rates when instrument identifying assumption holds}
    \begin{tabular}{l|ccc}
    & \multicolumn{3}{c}{Series estimation}   \\ 
    Significance level &$10\%$&$5\%$  &$1\%$\\  
    \midrule
    \midrule
    $n=200$ &$45.4\%$&$36.4\%$&$24.8\%$  \\ 
    $n=500$ &$25.8\%$&$19.4\%$&$9.0\%$   \\
    $n=1000$ &$17.2\%$&$13.2\%$&$7.4\%$    \\   
    $n=2000$ &$15.0\%$&$10.4\%$&$3.4\%$    \\
    $n=3000$ &$8.0\%$&$4.6\%$&$1.4\%$  \\ \hline
    \end{tabular}%
    \begin{center}
    \footnotesize{Based on 500 replications. Use series regression to estimate the conditional expectations.}
    \end{center}
    \label{tab:sizeolsiv}
\end{table}

To implement the test, we first estimate $\hat{\beta}_0$ and $\hat{\beta}_1$ under Assumptions \ref{ass:mean_indep} and \ref{ass:relevance} and obtain $\hat{U}_i = Y_i - \hat{\beta}_0 - {\beta}_1 X_i$. Then we use the Stata package of \cite{chernozhukov2015implementing} to compute the rejection rate of the testable implications in Equation \eqref{eq:implication3}. To estimate the supremum test statistic, we need to select grid points over the support of $Z_i$. We take $100$ grid points uniformly over the 1 centile to 99 centiles of the support of $Z_i$. Also, we select the series estimation to estimate conditional means $\mathbb{E}[\hat{U}_i \mid Z_i]$. 

From the results in Table \ref{tab:sizeolsiv}, we can observe that the rejection rates are higher than nominal sizes when the sample sizes are small. The reason is that we use a nonparametric method to estimate the conditional expectations, which have large estimation errors when the sample sizes are small. As the sample size increases, the rejection rates drop at each significance level and finally are close to nominal sizes, which is consistent with our expectations. When the sample size increases to 3000, rejection rates fall below the nominal sizes at significance levels of $10\%$ and $5\%$. The low rejection rates suggest that our tests might be conservative, which is caused by the precision correction term when constructing the test statistics.

We also come up with the null hypothesis specifications under the exogenous assumption of the regressor $X$ and the homoskedasticity assumption. The results are presented in Appendix \ref{app:simulation}, and we can observe a similar pattern for rejection rates as in Table \ref{tab:sizeolsiv}.

Next, we consider a nonlinear model that satisfies the IV assumption. We generate a DGP in the form of the Box-Cox model described in Equation \eqref{model:boxcox}.

\begin{equation}
    \left\{\begin{array}{l}
          Y_i = \beta_0 + \beta_1 X_i^{(\lambda)}+U_i, \\
          X_i = \gamma_0 + \gamma_1 Z_i+V_{+i},
          \end{array}\right.
          \label{simu:sizenlsiv}
\end{equation}
where $\beta_0 = 0$, $\beta_1 = 2$, $\gamma_0 = 0$, $\gamma_1 = 2$, and
\begin{equation}
        X_i^{(\lambda)} = \begin{cases}\frac{X_i^\lambda-1}{\lambda}, & \text { if } \lambda \neq 0 \\ \log (X_i), & \text { if } \lambda=0\end{cases}
\end{equation}
for $\lambda = 0, -1, 1$. We randomly draw $(U_i, V_i)$ i.i.d. from $\mathcal{N}(0, \Sigma)$ with $\Sigma=(1, 0.5; 0.5, 2)$, and $V_{+i}$ denotes the positive part of $V_i$. We also draw $Z_i \stackrel{\text { i.i.d. }}{\sim} \mathcal{U}(0, 10]$ independent with $(U_i, V_i)$. 
The simulation results are presented in Table \ref{tab:sizenlsiv}.
\begin{table}[!htbp]
  \centering
  \caption{Rejection rates when IV assumptions hold in a nonlinear model}
    \begin{tabular}{r|l|rrr}
    \multicolumn{1}{c|}{\multirow{2}[1]{*}{$\lambda$}} & \multicolumn{1}{c|}{\multirow{2}[1]{*}{Sample Size}} & \multicolumn{3}{c}{Significance level} \\
          &       & \multicolumn{1}{c}{10\%} & \multicolumn{1}{c}{5\%} & \multicolumn{1}{c}{1\%} \\
    \midrule
    \midrule
    0     & n = 200 & 16.00\% & 11.80\% & 7.00\% \\
          & n = 1000 & 8.80\% & 5.00\% & 1.40\% \\
          & n = 2000 & 5.00\% & 2.20\% & 0.80\% \\
          & n = 3000 & 5.80\% & 2.80\% & 0.40\% \\
    \midrule
    -1    & n = 200 & 23.80\% & 19.80\% & 12.40\% \\
          & n = 1000 & 15.40\% & 10.80\% & 5.80\% \\
          & n = 2000 & 15.40\% & 10.60\% & 4.80\% \\
          & n = 3000 & 15.40\% & 9.60\% & 2.60\% \\
    \midrule
    1     & n = 200 & 15.00\% & 11.20\% & 6.40\% \\
          & n = 1000 & 6.20\% & 2.40\% & 0.20\% \\
          & n = 2000 & 4.40\% & 1.60\% & 0.00\% \\
          & n = 3000 & 4.00\% & 1.80\% & 0.20\% \\
    \bottomrule
    \end{tabular}%
    \begin{center}
        \footnotesize{Based on 500 replications. Use series regression to estimate the conditional expectations.}
    \end{center}
  \label{tab:sizenlsiv}%
\end{table}%

When the sample size is $200$, the rejection rates at each significance level are larger than the nominal size for each value of $\lambda$. However, when the sample size increases, the rejection rates at each significance level decrease and are finally below the nominal size. 
Again, this shows that our tests are conservative, and researchers need to pay attention to this in practice. Note that the rejection rates are higher for $\lambda=-1$ compared to $\lambda=0,1$, and they do not fall below the nominal size for $\lambda=-1$ when $n=3000$ as they do for $\lambda=0,1$.


\subsection{Power of the test} \label{sec:power}

To investigate the power properties of our testing procedure, we consider the similar alternatives proposed in \cite{horowitz2001adaptive} and \cite{tripathi2003testing}. We first consider a specification of a linear model that violates the IV exogeneity assumption (Assumption \ref{ass:mean_indep}). In the model specification \eqref{simu:sizeolsiv}, we generate $Z_i$, $i = 1, \ldots, 1000$, from $Z_i \stackrel{\text { i.i.d. }}{\sim} \mathcal U{[-3,3]}$. Then, we generate $U_i = L / \sigma \cdot \phi(Z_i / \sigma) + \tilde{U}_i$, where $\tilde{U}_i = \min\{\max\{-3, \tilde{V}_i\}, 3\}$, $(\tilde{V}_i, V_i)$ is drawn independently of $Z_i$ and i.i.d. from $\mathcal{N}(0, \Sigma)$ with $\Sigma=(1, 0.5; 0.5, 1)$.\footnote{Note that $\tilde{U}_i$ has mean zero, as $\tilde{U}_i$ is a mean zero normal distribution truncated on $[-3,3]$ so that its distribution is symmetric around 0.} Under the constructed DGP, Assumption \ref{ass:mean_indep} fails. The conditional mean function $\mathbb{E}[U_i \mid Z_i = z] =  L / \sigma \cdot \phi(z / \sigma)$ is smooth in $z$ and has a unique maximizer. Under this constructed DGP, $L$ and $\sigma$ are two constants that determine the shape of $\mathbb{E}[U_i \mid Z_i = z]$. $L$ measures the average deviation of the conditional mean from zero, where a larger $L$ indicates a greater average deviation over the support. Also, $\sigma$ determines the shape of this function: a smaller $\sigma$ implies that the function $\mathbb{E}[U_i \mid Z_i =z]$ is more peaked around the maximizer. We conduct $500$ replications for each value of $L$ and $\sigma$. Table \ref{tab:power_iv} presents the rejection rates.

\begin{table}[htbp]
    \centering
    \caption{Rejection rates when instrument assumptions fail in linear model}
      \begin{tabular}{c|c|rrr}
            &       & \multicolumn{3}{c}{Significance Level} \\
      $L$     & $\sigma$ & \multicolumn{1}{c}{10\%} & \multicolumn{1}{c}{5\%} & \multicolumn{1}{c}{1\%} \\
      \midrule
      \midrule
      \multirow{4}[2]{*}{0.1} & 1     & 17.80\% & 11.00\% & 4.60\% \\
            & 0.5   & 19.60\% & 11.40\% & 4.40\% \\
            & 0.25  & 23.60\% & 14.40\% & 5.20\% \\
            & 0.1   & 25.80\% & 17.60\% & 6.60\% \\
      \midrule
      \multirow{4}[2]{*}{0.5} & 1     & 40.00\% & 30.00\% & 12.00\% \\
            & 0.5   & 86.20\% & 79.20\% & 58.20\% \\
            & 0.25  & 100.00\% & 99.60\% & 99.00\% \\
            & 0.1   & 100.00\% & 99.80\% & 99.60\% \\
      \midrule
      \multirow{4}[2]{*}{1} & 1     & 87.40\% & 78.80\% & 51.60\% \\
            & 0.5   & 100.00\% & 100.00\% & 100.00\% \\
            & 0.25  & 100.00\% & 100.00\% & 100.00\% \\
            & 0.1   & 100.00\% & 100.00\% & 100.00\% \\
      \bottomrule
      \end{tabular}%
    \label{tab:power_iv}%
    \begin{center}
        \footnotesize{Based on 500 replications. Use series regression to estimate conditional expectations.}
    \end{center}
\end{table}%
  
In Table \ref{tab:power_iv}, we can observe that when $L$ increases, which suggests that the deviation of the conditional mean $\mathbb E[U_i \mid Z_i=z]$ from zero is more significant, the rejection rates of our test increase at each significance level. This is consistent with our conjectures that our test is more powerful when deviations of the null are more significant. In addition, we can observe that given the value of $L$, the rejection rates increase when $\sigma$ becomes smaller. 
This relates to the precision-correction term when constructing the supremum test statistics. As we mentioned in Section \ref{sec: analytical_framework}, if the conditional moment function is more peaked, the method of \cite{chernozhukov2013intersection} can correct errors of estimating supremum test statistic more accurately. Therefore, when $\sigma$ becomes smaller, the conditional mean function is more peaked, and our test with the supremum test statistic can achieve greater power. 
We also get the rejection rates for specifications that violate the exogenous assumption on $X$. The simulation results are in Appendix \ref{app:simulation} and display a similar pattern as in Table~\ref{tab:power_iv}.

To compare our testing method with the common overidentification test, we consider the same DGP with $L = 0.5, \sigma = 0.25$, and we convert the conditional moment restriction in Assumption \ref{ass:mean_indep} to an unconditional moment condition $\mathbb{E}[h(Z_i) U_i] = 0$. We choose $h(Z) = (Z, Z^2, Z^3)'$ and estimate the parameters $(\beta_0, \beta_1)$ with two-stage least square method. Then, we apply Sargan's test to check the instrument mean independence validity of $Z$. We conduct $500$ simulations with different sample sizes and plot the rejection rates of those two tests in Figure \ref{fig:power_curve_linear}.
\begin{figure}[!htt]
        \centering
        \caption{Power curves for testing IV assumptions in linear model}
        \includegraphics[width = 0.8\textwidth]{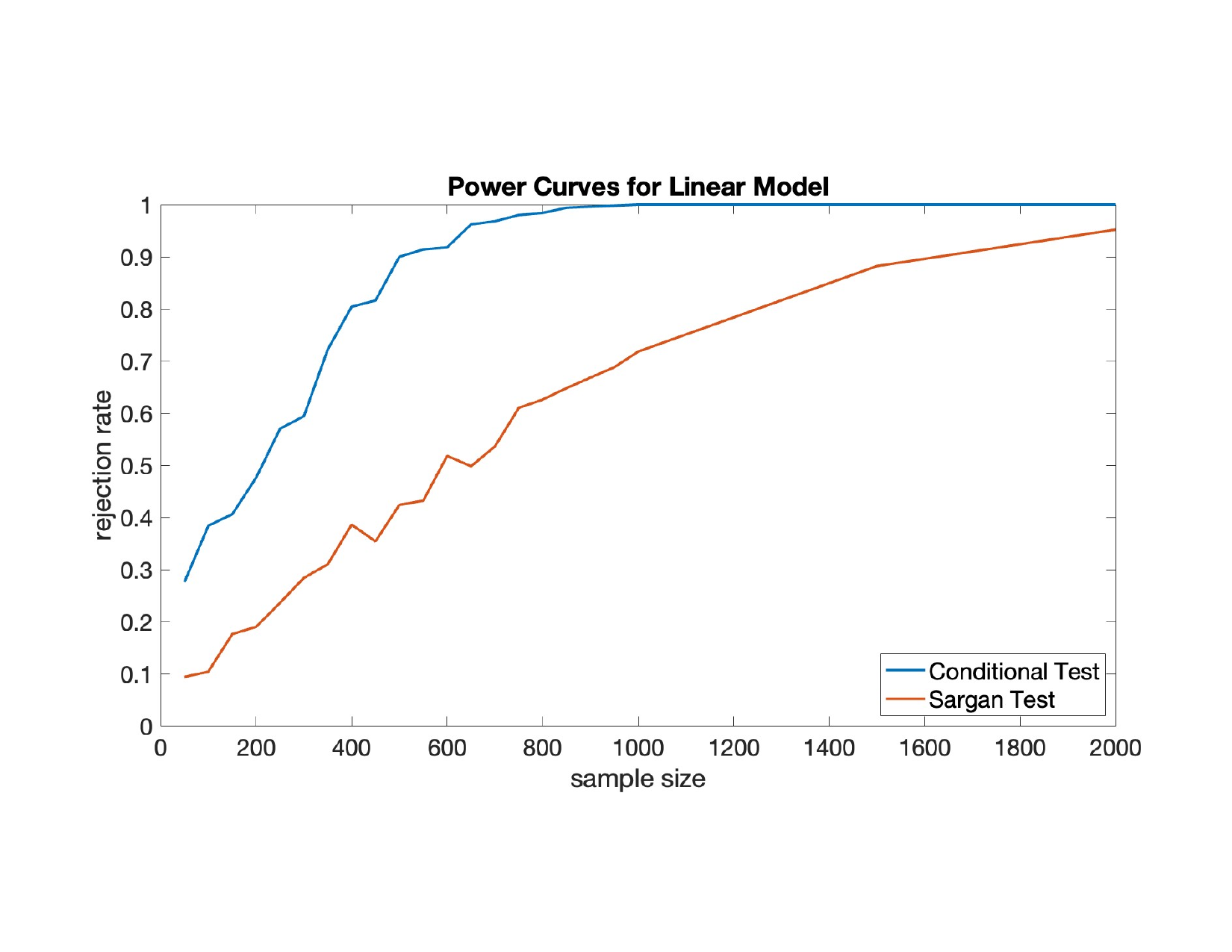}
        \label{fig:power_curve_linear}
\end{figure}
This Figure shows that if the instrument mean independence assumption is violated in the linear model, the rejection rates of both tests increase with the sample size and approach 1 as the sample size grows, which suggests that both tests are consistent. Also, our proposed conditional moment inequality test gives a higher rejection rate than Sargan's test for any sample size. Therefore, our test achieves better performance in power, and we suggest that researchers try our testing procedure to test the instrument validity even if they use the overidentification test and cannot reject the IV assumption. We also compare the power of our test with Sargan's test for other values of $L$ and $\sigma$, and the results are presented in the Appendix \ref{app:comparision} and show that our test can achieve higher powers than Sargan's test in most cases.

\section{Solutions when the parametric IV model is rejected} \label{sec:id_relax_assumption}
Several reasons may lead to the rejection of the identifying assumptions in the IV model. The rejection could be due to the violation of the functional form assumptions (misspecification, heterogeneous effects, etc.), the invalidity of the instrumental variable (exogeneity and/or exclusion restriction), etc. The literature has offered many solutions. \cite{nevo2012identification} develop an identification result in the parametric model when the correlation between the instrument and the error term has the same sign as the correlation between the regressor and the error term. They relax the IV exogeneity assumption while maintaining the exclusion restriction. \cite{Conley2012} present methods for performing inference while allowing for violations of the exclusion restriction but maintain exogeneity of the IV. In this framework, \cite{Masten2021} recommend reporting the set of parameters that are consistent with minimally nonfalsified models, which they refer to as the falsification adaptive set. So, when the IV model is rejected, we recommend that the researcher explores the possible relaxations discussed in the above mentioned papers if s/he believes in the functional form assumption. Below, we propose a way to relax the functional form while maintaining the exogeneity and exclusion restriction assumptions. 


\subsection{Relaxing Parametric Assumptions} \label{sec:relax_parametric_assumption}
Suppose we observe a random outcome variable $Y$, and a continuous random variable $X$ on $\mathbb{R}$, which is a potentially endogenous treatment variable. We also observe a random vector $Z$ on $\mathbb{R}^k$, $k \geq 1$, which is considered as an instrumental variable. 

Instead of parameterizing the equation of $Y$ on $X$, we consider the following model
\begin{equation} \label{model:tri_model}
    \left\{\begin{aligned}
        Y & =g(X, U), \\
        X & =h(Z, V),
    \end{aligned}\right.
\end{equation}
where $(U, V)$ represent the unobserved heterogeneity. The first equation in Equation \eqref{model:tri_model} is an outcome equation. We do not impose any functional form assumptions on function $g(\cdot, \cdot)$, and the unobservable $U$ can be multi-dimensional. We use $Y_x$ to denote the potential outcome when $X = x$, i.e. $Y_x = g(x, U)$. We use calligraphic characters to denote the supports of random variables. For example, $\mathcal{X}$ indicates the support of $X$.
The second equation is a treatment selection equation for $X$, which takes as arguments the instrument $Z$ and a scalar unobservable $V$. The function $h$ may have a structural interpretation, with potential treatments $X_z = h(z, V)$. However, we can also consider $h$ as a reduced form representation of the conditional distribution of $X$, i.e. define $V=F_{X|Z}(X)$ to be the rank of a unit with treatment value $X$ in the conditional distribution among units sharing their value of $Z$. Then the equation $X=h(Z,V)$ holds with probability one if we define $h$ to yield the conditional quantile function of $X$ given the instrument: $h(z,v)=Q_{X|Z=z}(v)$.\footnote{We follow the standard definition of the quantile funciton, where for any random variable $A$ and $u \in [0,1]$ we define $Q_{A}(u) = \inf \{a: F_{A}(a) \ge u\}$.}

Given $p \in \mathcal{V}$, and $x, x' \in \mathcal{X}$, we define the marginal treatment effect from $x'$ to $x$ at a given value $V=p$ as $\operatorname{MTE}\left(p ; x, x^{\prime}\right) \equiv \mathbb{E}\left[Y_x-Y_{x^{\prime}} \mid V=p\right]$.\footnote{This parameter is defined similarly to the marginal treatment effect with binary treatment, for example, see \cite{heckman2005structural}.} Also, given $x \in \mathcal{X}$, we define the average structural function as $\operatorname{ASF}(x) \equiv \mathbb{E}[Y_x]$. We can use these two parameters to identify other common treatment effects, such as average treatment effects or policy-relevant treatment effects. To identify the parameters of interest, we impose Assumptions \ref{ass:exogeneity} - \ref{ass:Vcdf}.

\begin{assumption}[Independence] \label{ass:exogeneity}
    $Z \indep (U,V)$.
\end{assumption}

\begin{assumption}[Strict monotonicity in V] \label{ass:mon}
    For any $z \in \mathcal{Z}$, where $\mathcal{Z}$ is the support of $Z$, the function $h(z, v)$ is continuous and strictly monotonic in $v$. 
\end{assumption}

\begin{assumption} \label{ass:Vcdf}
    The CDF of $V$ is absolutely continuous and strictly increasing.
\end{assumption}

Assumption \ref{ass:exogeneity} requires that the instrument variable $Z$ is independent of the unobservables $U$ and $V$. Assumption \ref{ass:mon} implies that $X$ is one-to-one mapped to $V$ given $z \in \mathcal{Z}$. Thus, given $Z$, we can invert the function $X = h(Z, V)$ with respect to $V$ and have $V = h^{-1}_{Z}(X)$ almost surely. Assumption \ref{ass:mon} is common in the existing literature. For example, when researchers assume that function $h(Z, V)$ is additively separable in $V$, say $X = f(Z) + V$, Assumption \ref{ass:mon} then holds straightforwardly if $f$ is continuous. Assumption \ref{ass:Vcdf} assures that we can normalize $V \sim \mathcal{U}[0, 1]$, where $\mathcal{U}$ denotes the uniform distribution.\footnote{Note that when $h$ is simply defined to be the conditional quantile function of $h(z,v)=Q_{X|Z=z}(v)$, Assumption \ref{ass:mon} amounts to supposing that the conditional distribution of $X$ has no mass points and has a positive density everywhere on its support. This guarantees that $F_{X|Z}(x)$ is the inverse function of $Q_{X|Z}(v)$ and that $V|Z \sim V \sim \mathcal{U}[0, 1]$. Note that this implies $Z \indep V$ by definition; however Assumption \ref{ass:exogeneity} remains a substantive assumption because it takes $U$ and $V$ to be jointly independent of $Z$.}

For any $x \in \mathcal{X}$ and $z \in \mathcal{Z}$, similar to the setting with a binary treatment, we define propensity score function as 
\begin{equation} \label{eq:p_score}
    P(z, x) \equiv \mathbb{P}(X \leq x \mid Z = z).
\end{equation}

Under Assumptions \ref{ass:exogeneity} - \ref{ass:Vcdf}, we show that the error term $V$ can be identified as the propensity score function on its support.

\begin{lemma}[Identification of $h^{-1}_{Z}$] \label{lemma:pscore}
    Consider the model defined in Equation \eqref{model:tri_model}. If Assumptions \ref{ass:exogeneity}-\ref{ass:Vcdf} hold for $x \in \mathcal{X}$ and $z \in \mathcal{Z}$, the following equality holds:
    \begin{equation} \label{eq:id_pscore}
            P(z, x) 
            = h^{-1}_{z}(x).
    \end{equation}
\end{lemma}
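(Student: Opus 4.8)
The plan is to exploit the one-to-one mapping between $X$ and $V$ that is guaranteed by Assumptions \ref{ass:mon}-\ref{ass:Vcdf}, together with the independence in Assumption \ref{ass:exogeneity}, to rewrite the conditioning event $\{X \le x\}$ in terms of $V$. First I would fix $z \in \mathcal{Z}$ and $x \in \mathcal{X}$, and note that by Assumption \ref{ass:mon} the map $v \mapsto h(z,v)$ is continuous and strictly monotonic, hence invertible on its range; write $h_z^{-1}$ for this inverse. Without loss of generality I would take $h(z,\cdot)$ to be strictly increasing (the decreasing case is symmetric, replacing $V$ by $1-V$), so that $X = h(z,V) \le x$ if and only if $V \le h_z^{-1}(x)$, almost surely given $Z=z$.

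Next I would apply this equivalence inside the propensity score definition \eqref{eq:p_score}:
\begin{equation*}
    P(z,x) = \mathbb{P}(X \le x \mid Z=z) = \mathbb{P}\big(h(z,V) \le x \mid Z = z\big) = \mathbb{P}\big(V \le h_z^{-1}(x) \mid Z=z\big).
\end{equation*}
By Assumption \ref{ass:exogeneity}, $Z \indep V$, so the conditioning on $Z=z$ can be dropped, giving $P(z,x) = \mathbb{P}\big(V \le h_z^{-1}(x)\big) = F_V\big(h_z^{-1}(x)\big)$. Finally, by Assumption \ref{ass:Vcdf} the CDF of $V$ is absolutely continuous and strictly increasing, so we may normalize $V \sim \mathcal{U}[0,1]$ (as noted in the text following Assumption \ref{ass:Vcdf}); under this normalization $F_V$ is the identity on $[0,1]$, and since $h_z^{-1}(x) \in [0,1]$ we conclude $P(z,x) = h_z^{-1}(x)$, which is \eqref{eq:id_pscore}.

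The main obstacle, though largely a matter of care rather than deep difficulty, is handling the normalization of $V$ cleanly and making sure the inverse $h_z^{-1}$ is the same object before and after renormalizing. Concretely, if one does not start with $V \sim \mathcal{U}[0,1]$, then $h_z^{-1}(x) = F_V(h_z^{-1}(x))$ only after composing the original inverse with $F_V$; the statement of the lemma implicitly adopts the $\mathcal{U}[0,1]$ normalization, so I would state at the outset that $V$ is normalized this way (invoking Assumption \ref{ass:Vcdf}) and let $h$ and hence $h_z^{-1}$ refer to the correspondingly reparametrized selection equation. A second minor point is the ``almost surely'' qualifier: the equivalence $\{h(z,V)\le x\} = \{V \le h_z^{-1}(x)\}$ can fail on the boundary point where $h(z,v)=x$ exactly, but since $V$ has a continuous distribution this event has probability zero and does not affect the probabilities. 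One should also note that strict monotonicity and continuity of $h(z,\cdot)$ ensure $h_z^{-1}$ is well-defined on the support of $X$ given $Z=z$, so the expression $h_z^{-1}(x)$ makes sense for every $x \in \mathcal{X}$ and $z \in \mathcal{Z}$ in the relevant range.
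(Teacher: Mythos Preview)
Your proof is correct and follows essentially the same chain of equalities as the paper: substitute $X=h(Z,V)$, invert using monotonicity (Assumption~\ref{ass:mon}), drop the conditioning on $Z$ via independence (Assumption~\ref{ass:exogeneity}), and then apply the $\mathcal{U}[0,1]$ normalization from Assumption~\ref{ass:Vcdf}. Your treatment is in fact a bit more careful than the paper's, explicitly addressing the direction of monotonicity, the normalization of $V$, and the well-definedness of $h_z^{-1}$.
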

\begin{proof}
    See Appendix \ref{app:proof_lemma3}.
\end{proof}
\noindent Note that as a consequence of \eqref{eq:id_pscore} and Assumptions \ref{ass:mon} and \ref{ass:Vcdf}, we have that
\begin{equation} \label{eq:testableimplication}
P(z, x) \textrm{ is continuous and strictly increasing in } x \textrm{ for all } z \in \mathcal{Z}
\end{equation}
We can identify the function $P(Z, X)$ from the population since $X$ and $Z$ are observed random variables, so that $h_{Z}^{-1}(X)$ is identifiable. Also, Assumption \ref{ass:mon} implies that $V = h^{-1}_{Z}(X)$ almost surely. Therefore, an important implication of Lemma \ref{lemma:pscore} is that we can identify the unobservable $V$ as $V=P(Z, X)$ on the support of $P(Z, X)$. We let $\mathcal{P}$ denote the support of propensity score function $P(Z, X)$ and $\mathcal{P}_x$ denote the support of $P(Z, X)$ given $X = x$ hereafter.

The next lemma proves that $P(Z, X)$ can serve as a control function.

\begin{lemma}[Control function] \label{lemma:identification}
    Consider the model defined in Equation \eqref{model:tri_model}. Under Assumptions~\ref{ass:exogeneity}-\ref{ass:Vcdf},  we can identify the conditional probability of potential outcome $\mathbb{P}(Y_x \in A \mid V = p)$ as 
    \begin{equation}\label{eq:dist}
        \mathbb{P}(Y \in A \mid X = x, P(Z, X) = p) = \mathbb{P}(Y_x \in A \mid V = p)
    \end{equation}
    for $x \in \mathcal{X}$, $p \in \mathcal{P}$, and any set $A \in \mathcal{F}_Y$, where $\mathcal{F}_Y$ denotes the Borel $\sigma$-field generated by $Y$.

    Also, we can identify the conditional expectation of $Y_x$ given $V=p$ as 
    \begin{equation} \label{eq:expectation}
        \mathbb{E}\left[Y \mid X = x, P(Z, X) = p\right] = \mathbb{E} \left[Y_x \mid V = p\right]
    \end{equation}
    for $x \in \mathcal{X}$ and $p \in \mathcal{P}$.
\end{lemma}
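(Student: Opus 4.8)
The plan is to reduce the conditioning set $\{X=x,\ P(Z,X)=p\}$ to $\{X=x,\ V=p\}$ using Lemma~\ref{lemma:pscore}, then exploit that $Y$ coincides with the potential outcome $Y_x$ on the event $\{X=x\}$, and finally remove the conditioning on $\{X=x\}$ by showing that, given $V$, the treatment $X$ is independent of the outcome-side unobservable $U$. Equation~\eqref{eq:expectation} will then follow from \eqref{eq:dist} by integration.

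I would begin by invoking Lemma~\ref{lemma:pscore}: under Assumptions~\ref{ass:exogeneity}--\ref{ass:Vcdf} we have $P(Z,X)=h^{-1}_Z(X)$, and since $h(z,\cdot)$ is strictly monotonic (Assumption~\ref{ass:mon}) it is invertible, so $h^{-1}_Z(X)=V$ almost surely. Consequently the events $\{X=x,\ P(Z,X)=p\}$ and $\{X=x,\ V=p\}$ agree up to a $\mathbb{P}$-null set, and it suffices to prove $\mathbb{P}(Y\in A\mid X=x,\ V=p)=\mathbb{P}(Y_x\in A\mid V=p)$. Next, on the event $\{X=x\}$ the outcome equation in \eqref{model:tri_model} gives $Y=g(X,U)=g(x,U)=Y_x$, whence $\mathbb{P}(Y\in A\mid X=x,\ V=p)=\mathbb{P}(Y_x\in A\mid X=x,\ V=p)$.

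It then remains to show that the extra conditioning on $\{X=x\}$ is redundant, i.e.\ $Y_x\indep X\mid V$. Since $Y_x=g(x,U)$ is a fixed measurable function of $U$, it is enough to show $U\indep X\mid V$, and this is where Assumption~\ref{ass:exogeneity} does the work: $Z\indep(U,V)$ implies, by factoring the joint law (or, heuristically, the joint density, which exists in $V$ by Assumption~\ref{ass:Vcdf}), that $U\indep Z\mid V$; and given $V$, $X=h(Z,V)$ is a measurable function of $Z$ alone, so $U\indep X\mid V$, hence $g(x,U)\indep X\mid V$. Therefore $\mathbb{P}(Y_x\in A\mid X=x,\ V=p)=\mathbb{P}(Y_x\in A\mid V=p)$, which is \eqref{eq:dist}. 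Finally, \eqref{eq:expectation} is obtained by integrating $y$ against the two (now identical) regular conditional laws appearing in \eqref{eq:dist}, using $\mathbb{E}|Y|<\infty$ so that the conditional means are well defined.

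The main obstacle is purely measure-theoretic: the events $\{X=x\}$ and $\{P(Z,X)=p\}$ both have probability zero (recall $X$ is continuous and, by \eqref{eq:testableimplication}, $P(Z,X)$ is strictly increasing and continuous in $x$), so each ``conditional probability given $X=x,\ P(Z,X)=p$'' must be read as a regular conditional distribution, and the displayed equalities hold only for almost every $(x,p)$ with respect to the joint law of $(X,P(Z,X))$ — in particular for $x\in\mathcal{X}$ and $p$ in the relevant support, as stated. Making the disintegration argument rigorous and upgrading $U\indep X\mid V$ to ``$U\indep X\mid V=p$ for $P_V$-a.e.\ $p$'' is the only delicate part; everything else is bookkeeping with the two structural equations.
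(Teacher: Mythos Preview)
Your proposal is correct and follows essentially the same route as the paper: invoke Lemma~\ref{lemma:pscore} to identify $P(Z,X)$ with $V$, replace $Y$ by $Y_x=g(x,U)$ on $\{X=x\}$, and then use $Z\indep(U,V)$ to remove the remaining extraneous conditioning. The only cosmetic difference is in the bookkeeping of the conditioning set at the last step: the paper rewrites $\{X=x,\ P(Z,X)=p\}$ as $\{V=p,\ h^{-1}_Z(x)=p\}$ and drops the $Z$-condition directly via $Z\indep U\mid V$, whereas you rewrite it as $\{X=x,\ V=p\}$ and drop the $X$-condition by noting that $X=h(Z,V)$ is $\sigma(Z)$-measurable given $V$, so $U\indep X\mid V$ follows from $U\indep Z\mid V$. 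These are the same argument; your extra remarks on regular conditional distributions and the a.e.\ qualification are in fact more careful than the paper's own derivation.
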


\begin{proof}
    See Appendix \ref{app:proof_lemma4}.
\end{proof}

 \noindent \begin{remark} Lemma \ref{lemma:pscore} and an analog of Lemma \ref{lemma:identification} have previously been established by \cite{imbens2009identification} (henceforth IN). IN consider an augmented setup where $X$ can be a vector $X=(X_1,Z_1)$, where $Z_1$ represent exogenous variables that can enter the outcome equation directly. The second equation of \eqref{model:tri_model} is then replaced by $X_1=h(Z_1,Z_2,V)$, i.e. $Z_2$ represent excluded instruments that do not enter the outcome equation. 
\end{remark}

\noindent \begin{remark} A generalization that nests both the model described above and that of IN is $Y=g(X_1,Z_1,U)$ and $X_1=h(Z_1,Z_2,V)$ (as IN do) but where Assumptions \ref{ass:exogeneity} and \ref{ass:Vcdf} all hold conditional on $Z_1$. That is: $(Z_2 \indep U)|Z_1,V$ and $(Z_2 \indep V) |Z_1$, while the CDF of $V$ conditional on $Z_1$ is absolutely continuous and strictly increasing, with probability one (for all $Z_1$). This allows the observed variables $Z_1$ to serve as genuine control variables, rather than as ``included'' exogenous regressors as in IN. In this case all of our identification results carry through conditional on $Z_1$, but we omit this generalization for brevity.
\end{remark}

We now show that in the case with no control variables $Z_1$, there are no testable implications of Assumptions \ref{ass:exogeneity}-\ref{ass:Vcdf}, beyond some regularity and support conditions that can be directly verified. This stands in contrast to parametric models like those studied in Section \ref{sec: analytical_framework}, which we have shown have testable implications that can be rejected by the data.

The following condition will be useful in establishing Proposition \ref{prop:nofurther}:
\begin{condition} \label{ass:onetoone}
For any fixed $v \in \mathcal{P}$, let $h^*_v(z) = Q_{X|Z=z}(v)$:
\begin{enumerate}
    \item $h^*_v$ is surjective ("onto"). That is, for any $x \in \mathcal{X}$, there exists $ z \in \mathcal{Z}$ such that $h^*_v(z)=x$.
    \item $h^*_v$ is injective ("one-to-one"). That is, for any $x \in \mathcal{X}$, there exists at most one $z \in \mathcal{Z}$ such that $h^*_v(z)=x$.
\end{enumerate}
\end{condition}
\noindent Given Assumption \ref{ass:Vcdf}, part one of Condition \ref{ass:onetoone} is equivalent to what IN call ``common support''.\footnote{To see this, note that the support of $V$ is $\mathcal{P}$, and the common support condition of IN says that $supp\{P(x,Z)\}=\mathcal{P}$ for all $x \in \mathcal{X}$, in other words for all $v \in \mathcal{P}$ and $x \in \mathcal{X}$, there exists a $z \in \mathcal{Z}$ such that $P(x,z)=v$. Meanwhile, the first part of Condition \ref{ass:onetoone} says that $v \in \mathcal{P}$ and $x \in \mathcal{X}$, there exists a $z \in \mathcal{Z}$ such that $Q_{X|Z=z}(v)=x$. The two are the same given Assumption \ref{ass:Vcdf}.} IN show that all of the quantiles of $g(x,U)$ are identified if Condition \ref{ass:onetoone}.1 holds along with Assumptions \ref{ass:exogeneity}-\ref{ass:Vcdf}. Condition \ref{ass:onetoone} overall holds naturally in parametric models in which $h^*(x,v)$ is additively separable and strictly increasing in $v$, for example, for example if $h_v(z) = \pi \cdot z + \phi(v)$ for some function $\phi$ and $\pi \ne 0$. More generally, both parts of Condition \ref{ass:onetoone} can be directly tested in the data, since they are features of the observed joint distribution of $X$ and $Z$. 

Together, parts 1 and 2 of Condition  \ref{ass:onetoone} say that the function $h^*_v$ is a bijection, for each $x \in \mathcal{X}$ there exists exactly one $z \in \mathcal{Z}$ such that $h^*_v(z)=x$ and vice-versa. A useful implication of Condition \ref{ass:onetoone} for the result that follows is that it enables us to rewrite events written in terms of $Z$ as events written in terms of $X$, for a fixed value of $P(Z,X)$. Specifically, for any Borel set $A$, the event $Z \in A$ is equivalent to the event $X \in h^*_{P(Z,X)}(A)$ conditional on $P(Z,X)$, where we define $h^*_v(A):=\{h^*_v(A): z \in A\}$.\footnote{To see this note that given Condition \ref{ass:onetoone} and conditional on the value of $P(Z,X)$, we have that $X=h^*_{P(Z,X)}(z)$ if and only if $Z=z$, since $h^*_{P(Z,X)}(Z)=Q_{X|Z}(F_{X|Z}(X))=X$ and for any $z' \ne z$ we have by Condition \ref{ass:onetoone} that $h^*_{P(Z,X)}(z') \ne h^*_{P(Z,X)}(z)$.}


Now we are ready to state the main result on the testability of this relaxed IV model:
\begin{proposition} \label{prop:nofurther}
    Suppose that the data satisfies Condition \ref{ass:onetoone}. Then there are no further testable implications of Assumptions \ref{ass:exogeneity}-\ref{ass:Vcdf} together with the model structure  \eqref{model:tri_model}, beyond Equation \eqref{eq:testableimplication}.
\end{proposition}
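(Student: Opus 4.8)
The plan is to establish the ``sharpness'' direction directly: I will show that every joint distribution of $(Y,X,Z)$ satisfying Condition \ref{ass:onetoone} together with the regularity restriction \eqref{eq:testableimplication} is rationalized by some pair of functions $(g,h)$ and some random vector $(U,V,Z)$ obeying the model \eqref{model:tri_model} and Assumptions \ref{ass:exogeneity}--\ref{ass:Vcdf}, with $\big(g(h(Z,V),U),\,h(Z,V),\,Z\big)\overset{d}{=}(Y,X,Z)$. Since \eqref{eq:testableimplication} is itself an implication of the model (noted just after Lemma \ref{lemma:pscore}) and Condition \ref{ass:onetoone} is a directly verifiable feature of the law of $(X,Z)$, this is exactly the assertion that the model has no empirical content beyond these.

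\textbf{Construction.} On the original probability space carrying $(Y,X,Z)$, set $V:=F_{X\mid Z}(X)$ and $h(z,v):=Q_{X\mid Z=z}(v)$. By \eqref{eq:testableimplication} each $F_{X\mid Z=z}$ is continuous and strictly increasing, so $h(z,\cdot)$ is its continuous, strictly increasing inverse --- giving Assumption \ref{ass:mon} --- and $h(Z,V)=X$ almost surely; moreover $V\mid Z=z\sim\mathcal{U}[0,1]$ for every $z$, hence $V\sim\mathcal{U}[0,1]$ and $V\indep Z$, which delivers Assumption \ref{ass:Vcdf}. Now enlarge the space by an auxiliary $\eta\sim\mathcal{U}[0,1]$ drawn independently of $(Y,X,Z)$, put $U:=(V,\eta)$, and define $g\big(x,(p,e)\big):=Q_{Y\mid X=x,\,V=p}(e)$ on the support of $(X,V)$ (and arbitrarily, say $0$, off it). Finally set $X^{*}:=h(Z,V)$ and $Y^{*}:=g(X^{*},U)$, so that \eqref{model:tri_model} holds by construction.

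\textbf{Verification.} Three things remain. First, Assumption \ref{ass:exogeneity}: since $\sigma(U,V)=\sigma(V,\eta)$, it suffices to show $(V,\eta)\indep Z$, which follows by conditioning on $(V,Z)$ and combining $\eta\indep(V,Z)$ with $V\indep Z$. Second, $X^{*}=X$ almost surely, already noted. Third --- the crux --- $(Y^{*},X^{*},Z)\overset{d}{=}(Y,X,Z)$: because $X^{*}=X$ and the third coordinate is literally $Z$, it is enough to check $\mathbb{P}(Y^{*}\in A\mid X,Z)=\mathbb{P}(Y\in A\mid X,Z)$ almost surely for Borel $A$. Here Condition \ref{ass:onetoone}.2 is invoked to conclude that $Z$ is a measurable function of $(X,V)$ --- indeed $Z$ is the unique $z$ with $Q_{X\mid Z=z}(V)=X$ --- so $\sigma(X,V)=\sigma(X,Z)$ and hence $\mathbb{P}(Y\in A\mid X,Z)=\mathbb{P}(Y\in A\mid X,V)$. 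On the other side, $Y^{*}=Q_{Y\mid X,V}(\eta)$ with $\eta\indep(X,Z,V)$, so conditionally on $(X,V)$ the probability integral transform yields $\mathbb{P}(Y^{*}\in A\mid X,V)=\mathbb{P}(Y\in A\mid X,V)$, and the two sides coincide. Condition \ref{ass:onetoone}.1 enters only to guarantee that the conditional quantile $Q_{Y\mid X=x,V=p}$ used to define $g$ is available for every $x\in\mathcal{X}$ and $p\in\mathcal{P}$ --- the same common-support requirement already used by \cite{imbens2009identification} in Lemma \ref{lemma:identification}.

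\textbf{Main obstacle.} The delicate point is securing the \emph{joint} independence $(U,V)\indep Z$ of Assumption \ref{ass:exogeneity}, not merely $V\indep Z$: since $g(x,U)\mid V=p$ is permitted (and indeed required) to vary with $p$, the type $U$ must be correlated with $V$ while remaining jointly independent of $Z$, and the device of appending a fresh uniform $\eta$ and letting $g$ read off conditional quantiles is what reconciles these. The second subtlety is that a function of $(X,U)$ alone must reproduce $\mathbb{P}(Y\mid X,Z)$; this is possible precisely because injectivity in Condition \ref{ass:onetoone} makes $Z$ recoverable from $(X,V)$ --- without it, two instrument values sharing a propensity score but yielding different conditional outcome distributions would be a genuine additional testable implication, which is exactly why the condition is imposed.
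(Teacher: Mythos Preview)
Your argument is correct and mirrors the paper's construction of $V=F_{X\mid Z}(X)$, $h(z,v)=Q_{X\mid Z=z}(v)$ and $g(x,(p,\cdot))=Q_{Y\mid X=x,V=p}(\cdot)$, but it departs from the paper at one key point. The paper takes the uniform ``rank'' variable to be $\theta^{*}:=F_{Y\mid X,V}(Y)$, defined on the \emph{original} space; then $Y=g^{*}(X,U^{*})$ holds identically, and the burden is to verify $Z\indep\theta^{*}\mid V^{*}$, which is where Condition \ref{ass:onetoone} enters (the event $\{Z\in A\}$ is rewritten as $\{X\in h^{*}_{v}(A)\}$ given $V^{*}=v$, and then $\theta^{*}\indep(X,Z)$ is invoked). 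You instead draw a fresh $\eta\indep(Y,X,Z)$, which makes $(U,V)\indep Z$ immediate but shifts the work to checking that the reconstructed $Y^{*}=Q_{Y\mid X,V}(\eta)$ has the right joint law with $(X,Z)$; you then use injectivity to get $\sigma(X,V)=\sigma(X,Z)$. The two routes are dual: each uses the bijection in Condition \ref{ass:onetoone} to pass between conditioning on $Z$ and conditioning on $X$ given $V$, just at different steps. Your variant has the pleasant side effect of handling discrete or mixed $Y$ without the additional ``smoothing'' argument the paper develops in its auxiliary Lemma (since $Q_{A}(U)\overset{d}{=}A$ for \emph{any} law $A$ when $U$ is uniform), whereas the paper's variant stays on the original probability space. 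Both proofs leave implicit the same mild regularity that $z\mapsto Q_{X\mid Z=z}(v)$ be suitably measurable, which is needed for your claim that $Z$ is $\sigma(X,V)$-measurable and for the paper's set-rewriting step.
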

\begin{proof}
See Appendix \ref{proof:prop:nofurther}.
\end{proof}

Proposition \ref{prop:nofurther} contrasts with results that hold in settings without functional form restrictions but where the treatment is discrete \citep{kitagawa2015test, Huber2015TestingConstraints, mourifie2017testing}, for example the LATE model. In this model, under monotonicity (no-defiers) and random assignment of the instrument, the joint probability of being a complier and having a potential outcome in a given Borel set is point-identified. Testability comes from the non-negativity of this joint probability.   
This approach relies on the ability to vary the instrument with the treatment value fixed, while maintaining overlap between the first-stage types. 
Given Condition \ref{ass:onetoone}, this is not possible in the model of this section because there are one-to-one mappings between $X$ and $Z$ and between $X$ and $V$. 

Proposition \ref{prop:nofurther} can also be seen as an extension of the result of \citet{gunsilius2021nontestability}, who shows that a still weaker nonparametric IV model in which Assumption \ref{ass:mon} is dropped and $V$ is allowed to be multivariate lacks any testable implications whatsoever.

\begin{remark} If Condition \ref{ass:onetoone} does not hold, which may be expected to occur especially when $Z$ is multidimensional, we have testable implications for Assumptions \ref{ass:exogeneity} - \ref{ass:Vcdf}. If those assumptions hold, and there exist $z, z' \in \mathcal{Z}, z \neq z'$ such that $P(z, x) = P(z', x) = p$ given $x$, based on conclusions in Lemma \ref{lemma:identification},
    \begin{equation*}
        \begin{aligned}
        & \mathbb{P}(Y \in A \mid X = x, Z = z) \\
        =& \mathbb{P}(Y \in A \mid X = x, P(z, X)=p) \\
        =& \mathbb{P}(Y_x \in A \mid V = p) \\
        =& \mathbb{P}(Y \in A \mid X = x, P(z', X)=p) \\
        =& \mathbb{P}(Y \in A \mid X = x, Z = z').
        \end{aligned}
    \end{equation*}
    Therefore, a testable implication for Assumptions \ref{ass:exogeneity} - \ref{ass:Vcdf} is that $Y \mid \{X = x, Z = z\}$ has the same distribution as $Y \mid \{X = x, Z = z'\}$ if two different instrument values $z, z'$ give the same propensity score evaluated at a fixed $x$. 
\end{remark}

In our setting, $X$ is potentially endogenous, so that the equality $\mathbb{P}(Y_x \in A) = \mathbb{P}(Y \in A \mid X = x)$ may not hold. Lemma \ref{lemma:identification} proves that $P(Z, X)$ serves as a control function. Once conditioning on $P(Z, X)$, $X$ is independent of the potential outcome $Y_x$. Also, since $\mathbb{P}(Y \in A \mid X = x, P(Z, X) = p)$ can be identified in the population level, $\mathbb{P}(Y_x \in A \mid V = p)$ is identifiable. Similarly, we can identify the conditional mean of potential outcome $\mathbb{E}\left[Y_x \mid V = p\right]$, which enables us to identify two parameters of interest, $\operatorname{MTE}\left(p ; x, x^{\prime}\right)$ and $\operatorname{ASF}\left(x\right)$.

\begin{theorem}\label{thm:MTE} (Identifications of MTE and ASF)
    Consider the model defined in Equation \eqref{model:tri_model}. Suppose that Assumptions \ref{ass:exogeneity}-\ref{ass:Vcdf} hold. Then, we can identify the $\operatorname{MTE} (p; x, x')$ as 
    \begin{equation} \label{point_id:MTE_x}
        \operatorname{MTE} (p; x, x') = \mathbb{E}\left[Y \mid X = x, P(Z, X) = p\right] - \mathbb{E}\left[Y \mid X = x', P(Z, X) = p\right]
    \end{equation}
    for any $x \in \mathcal{X}$ and $p \in \mathcal{P}$.

    If $\mathcal{P}_x = [0, 1]$ for any $x$, we can point identify the average structure function as 
    \begin{equation} \label{point_id:ASF_x}
        \operatorname{ASF} (x) = \int_0^1 \mathbb{E}\left[Y \mid X = x, P(Z, X) = p\right] d p.
    \end{equation}

    If $\mathcal{P}_x$ does not have full support but $\mathcal{P}_x \equiv [\underline{p}_x, \bar{p}_x]$, where $\underline{p}_x \geq 0$ and $\bar{p}_x \leq 1$.\footnote{The results can be generalized to cases where the support of the propensity score is not a single interval under some regularity conditions.} Then we can partially identify the average structure function $\operatorname{ASF} (x)$ as 
    \begin{equation}
        \begin{bmatrix} \label{partial_id:MTE_x}
            \int_{[\underline{p}_x, \bar{p}_x]} \mathbb{E}\left[Y \mid X = x, P(Z, X) = p\right] d p + Y_l (1 - \bar{p}_x + \underline{p}_x), \\  
            \int_{[\underline{p}_x, \bar{p}_x]} \mathbb{E}\left[Y \mid X = x, P(Z, X) = p\right] d p + Y_u (1 - \bar{p}_x + \underline{p}_x)
        \end{bmatrix}.
    \end{equation}
     provided that $Y_x$ has bounded support $[Y_l, Y_u]$ for any $x \in \mathcal{X}$.
\end{theorem}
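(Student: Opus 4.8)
The plan is to derive all three claims from Lemma~\ref{lemma:identification}, combined with the law of iterated expectations and the normalization $V \sim \mathcal{U}[0,1]$ that Assumption~\ref{ass:Vcdf} permits. Lemma~\ref{lemma:identification} already does the heavy lifting, since it states that $\mathbb{E}[Y \mid X=x, P(Z,X)=p] = \mathbb{E}[Y_x \mid V=p]$; the remaining work is to recombine these conditional means into the target parameters and to track the support conditions carefully.

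First, for the MTE: fix $x,x' \in \mathcal{X}$ and $p \in \mathcal{P}_x \cap \mathcal{P}_{x'}$, so that both conditioning events occur with positive density. By definition and linearity of conditional expectation, $\operatorname{MTE}(p;x,x') = \mathbb{E}[Y_x \mid V=p] - \mathbb{E}[Y_{x'} \mid V=p]$, and applying Eq.~\eqref{eq:expectation} to each term gives Eq.~\eqref{point_id:MTE_x}. Second, for the ASF under full support: since $V \sim \mathcal{U}[0,1]$, the law of iterated expectations yields $\operatorname{ASF}(x) = \mathbb{E}[Y_x] = \int_0^1 \mathbb{E}[Y_x \mid V=p]\,dp$; when $\mathcal{P}_x = [0,1]$, Eq.~\eqref{eq:expectation} holds for Lebesgue-a.e.\ $p \in [0,1]$, and substituting under the integral produces Eq.~\eqref{point_id:ASF_x}. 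A Tonelli-type argument, valid because $\mathbb{E}|Y|<\infty$, justifies expressing the unconditional mean as this integral of conditional means.

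Third, for the partial identification result, split $\operatorname{ASF}(x) = \int_{[\underline{p}_x,\bar{p}_x]} \mathbb{E}[Y_x \mid V=p]\,dp + \int_{[0,1]\setminus[\underline{p}_x,\bar{p}_x]} \mathbb{E}[Y_x \mid V=p]\,dp$. The first term equals $\int_{[\underline{p}_x,\bar{p}_x]} \mathbb{E}[Y \mid X=x, P(Z,X)=p]\,dp$ by Eq.~\eqref{eq:expectation} and is therefore identified. For the second term, $Y_x \in [Y_l, Y_u]$ forces $\mathbb{E}[Y_x \mid V=p] \in [Y_l, Y_u]$ for every $p$, while the Lebesgue measure of $[0,1]\setminus[\underline{p}_x,\bar{p}_x]$ equals $1-\bar{p}_x+\underline{p}_x$; hence this term lies in $[\,Y_l(1-\bar{p}_x+\underline{p}_x),\, Y_u(1-\bar{p}_x+\underline{p}_x)\,]$, which yields the two endpoints displayed in Eq.~\eqref{partial_id:MTE_x}.

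The main obstacle is not obtaining these bounds but establishing that the interval in Eq.~\eqref{partial_id:MTE_x} is sharp, if that is the intended claim: one must exhibit, for each value in the interval, an admissible model generating it. The endpoints follow from the constant completions $\mathbb{E}[Y_x \mid V=p] \equiv Y_l$ and $\equiv Y_u$ on $p \notin \mathcal{P}_x$, and interior values by a convexity/continuity argument; the care lies in verifying that such a completion of the conditional law of $Y_x$ on the unobserved region of $V$ can be arranged without disturbing the observed joint distribution of $(Y,X,Z)$ and while remaining consistent with Assumptions~\ref{ass:exogeneity}-\ref{ass:Vcdf}. If instead the theorem asserts only valid (not necessarily sharp) bounds, the proof is complete after the three steps above, with the only remaining routine checks being integrability and the interchange of integration with the conditional-expectation operator.
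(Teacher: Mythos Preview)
Your proposal is correct and follows exactly the paper's approach: the paper's own proof is a single sentence stating that the theorem ``follows from Lemma~\ref{lemma:identification} and definitions of $\operatorname{MTE}(p;x,x')$ and $\operatorname{ASF}(x)$,'' and you have simply written out those details. Your caution about sharpness is well placed but unnecessary here, since the theorem as stated only asserts valid bounds and the paper makes no sharpness claim.
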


\begin{proof}
    Theorem \ref{thm:MTE} follows from Lemma \ref{lemma:identification} and definitions of $\operatorname{MTE} (p; x, x')$ and $\operatorname{ASF} (x)$.
    
\end{proof}

From Lemma \ref{lemma:identification} and Theorem \ref{thm:MTE}, we can (partially) identify other parameters of interest, such as average treatment effects $\operatorname{ATE}(x, x') \equiv \mathbb{E}[Y_x - Y_x']$ and distributions of potential outcomes $\mathbb{P}\left(Y_x \in A\right)$.

\section{Empirical Illustrations}\label{sec: empirical}

In this section, we apply our testing procedure to check the instrument validity in two empirical studies. The results show that our testing method rejects the validity of instrumental variables in some settings. Therefore, we recommend researchers test the instrument validity using our test when working with parametric separable models.  

\subsection{Testing the Validity of the Bartik Instrument in \cite{card2009immigration}} \label{sec:test_card09}

In this section, we revisit \cite{card2009immigration} and examine the validity of the instrumental variable used in the paper. \cite{card2009immigration} studies the impact of immigration on wage imbalances in the United States. Particularly, we revisit his analysis on substitution between immigrants and natives within the same skill group (Table 6 in \cite{card2009immigration}). \cite{card2009immigration} uses the 1980-2000 census data and the combined 2005 and 2006 American Community Surveys to construct panel data with city-level labor market variables in 1980, 1990, 2000, 2005, and 2006. He focuses on the 124 largest MSAs or PMSAs in the US as of 2000.

To estimate the elasticity of substitution, the paper focuses on a linear regression
\begin{equation}
    r_{M j k}-r_{N j k}= \beta_0 + \beta_1 \log \left[S_{M j k} / S_{N j k}\right] + \beta_2' \mathbf{X}_{j}  + u_{j k},
    \label{model:card2009}
\end{equation}
where $r_{M j k}$ represents the mean wage residual for immigrant men in the city $j$ and skill group $k \in \{\textit{hs}, \textit{coll}\}$ ($k = \textit{hs}$ stands for high school or equivalent level, and $k = \textit{coll}$ for college or equivalent), $r_{N j k}$ represents the mean wage residual for native men in the city $j$ and skill group $k$, $S_{M j k} / S_{N j k}$ denotes the ratio of immigrant to native working hours in the city $j$ and skill group $k$, including both men and women workers, and $\mathbf{X}_{j}$ is a vector of city-level controls including log city size, college share, mfg. share in 1980 and 1990, and mean wage residuals for all natives and all immigrants in 1980. By the model structure in Equation \eqref{model:card2009}, the parameter $\beta_1$ measures the negative inverse elasticity of substitution between immigrants and natives in skill group $k$ of city $j$.

However, labor supply may respond to unobserved demand shocks (e.g., new constructions), which could also affect relative earnings. This situation may lead to endogeneity of the variable $S_{M j k} / S_{N j k}$ in this model. To address this potential endogeneity problem, the paper proposes an instrument for the ratio of immigrant to native working hours, $S_{M j k} / S_{N j k}$. The instrument is constructed as
\begin{equation}
    B_{j k} \equiv \frac{\left(\sum_m \lambda_{m j} M_m \delta_{m k}\right)}{P_j},
    \label{iv:card2009}
\end{equation}
where $\lambda_{m j} \equiv N_{m j} / N_m$ is the fraction of previous immigrants from country $m$ that live in city $j$, $N_m$ denotes the number of previous immigrants from country $m$ in the United States, while $N_{m j}$ denotes the number of previous immigrants from country $m$ living in city $j$. $M_m$ stands for the current number of immigrants from country $m$ to the United States, $P_j$ represents the number of the whole population in city $j$, and $\delta_{m k}$ represents the fraction of immigrants from country $m$ that are in skill group $k$. Specifically, to calculate the instrument $B_{j k}$ for immigration labor supply in city $j$ in 2000, \cite{card2009immigration} uses national inflows of immigrants from 38 source countries/country groups over the period from 1990 to 2000 to construct $\lambda_{m j}$, and the shares of each group observed in each city in 1980 to construct $\delta_{m k}$. The intuition of choosing the instrument $B_{j k}$ is that since several studies showed that new immigrants tend to move to the same cities as earlier immigrants, $\lambda_{m j} M_m$ can serve as a prediction for the current immigrants from country $m$ to the city $j$ and could also be independent to the current unobserved demand shocks. Also, the instrument does not correlate with local shocks by such construction. The instrument $B_{j k}$ shares the same idea as the instrument in \cite{bartik1991benefits} and is widely applied across labor economics and international trade.

To identify parameters $(\beta_0, \beta_1, \beta_2')'$, we need to impose the IV assumptions. In this specific case, we need $\mathbb{E} [u_{j k} \mid B_{j k}] = 0$ and $\operatorname{Cov}(S_{M j k} / S_{N j k}, B_{j k}) \neq 0$ to hold. When applying our testing method, the results show that the instrument validity is rejected in the linear structure of Eq.\eqref{model:card2009} at all significance levels for both high-school- and college-equivalent groups. For the group with high school and equivalent levels, the test statistic $\hat{\theta}_{1-\alpha}$ in Equation \eqref{sup_estimator} is equal to $0.081 >0$ for all three significance levels $\alpha = 10\%, 5\%, 1\%$. For the group with college and equivalent level, the test statistic $\hat{\theta}_{1-\alpha}$ is $0.044>0$ for all three levels $\alpha = 10\%, 5\%, 1\%$. A potential reason could be that the unobserved demand shocks for immigrants are persistent over time, which leads to the correlation between the instrument $B_{j k}$ and the current demand shock.
Hence, we have seen an example in which Bartik-type instruments may not satisfy the IV assumptions, and our test method provides a credible criterion for researchers to check validity of their instrument in other settings.

\subsection{Testing the Validity of the Price IV in \cite{nevo2012identification}} \label{sec:test_NR12}

In industrial organization, researchers are often interested in estimating the demand for differentiated products. However, the price variable in the demand function is likely to be endogenous, and researchers commonly use instrument variables to address this issue. In this section, we revisit the application in \cite{nevo2012identification}. They pointed out that the commonly used price instrument variable may not satisfy the exogeneity assumption. We apply our method to check their conjecture.

Researchers often assume that the difference in the log market shares is linear in the price and observed characteristics,
\begin{equation}
    \log \left(s_{j k}\right)-\log \left(s_{0 k}\right)=p_{j k} \beta+w_{j k}^{\prime} \Gamma+\xi_{j k},
    \label{iv:demand}
\end{equation}
where $p_{j k}$, $w_{j k}$ represent the price and observed characteristics of product $j$ in market $k$, and $\xi_{j k}$ denotes the unobserved characteristics of product $j$ in the market $k$. The parameters of interest are $(\beta, \Gamma')'$. In practice, the price $p_{j k}$ is potentially correlated with the unobserved demand shock $\xi_{j k}$ (e.g., product quality), which leads to the endogeneity problem. The widely used instruments for $p_{j k}$ are the prices for the same product $j$ in other markets. Ideally, the prices for the same product in markets other than $k$ are related to the price $p_{j k}$ through the common supply shocks (e.g., common production costs), and the demand shocks across different markets are independent. However, the independence of demand shocks across different markets could be violated in some situations. For example, the advertisement within a region could affect the product preference in all markets in that region \citep{nevo2012identification}. Therefore, \cite{nevo2012identification} proposed to use prices in other markets as imperfect instrument variables (IIVs), and they study the ready-to-eat cereal industry at the brand-quarter-MSA (metropolitan statistical area) level. 

We apply our testing procedure to check whether the prices in other markets satisfy the identifying assumptions in the data studied by \cite{nevo2012identification}. When we use the average price in the other city as the instrument variable for price $p_{j k}$ and estimate $(\beta, \Gamma')'$ using the two-stage least squares method, the test results show that the instrument variable validity is rejected with imposed parametric assumptions at all three significance levels, as the supremum statistic $\hat{\theta}_{1-\alpha}$ in Equation \eqref{sup_estimator} is equal to $0.17>0$ for $\alpha = 1\%$, $0.19>0$ for $\alpha=5\%$, and $0.20>0$ for $\alpha = 10\%$. Therefore, we confirm the results in \cite{nevo2012identification} that the average price in other markets is not a valid instrument for the price in the ready-to-eat cereal industry.

\subsection{Applying Relaxed Assumptions: \cite{card2009immigration}}
In Section \ref{sec:test_card09}, our conditional moment inequality test rejects the instrument constructed in \cite{card2009immigration}. If one believes that the instrument $B_{jk}$ is independent of the unobserved demand shocks, then the rejection may caused by the misspecification of the model. In Eq.\eqref{model:card2009}, \cite{card2009immigration} assumes that the gap between mean wage residuals, $(r_{M j k}-r_{N j k})$, is linear in the log of immigrant-to-native ratio $\log \left[S_{M j k} / S_{N j k}\right]$, and the observable $u_{j k}$ is additively separable. However, those specifications could be incorrect if the log of the immigrant-to-native ratio has nonlinear effects on the gap between mean wage residuals or if the unobservable enters into the equation in a more complicated way. In this case, we can consider the model \eqref{model:tri_model} instead to relax those parametric assumptions. \cite{card2009immigration} also includes a vector of controls in his analysis. We assume that the controlled covariates affect the outcome, the treatment, and the log instrument linearly. To exclude the linear covariates' effect, we regress $(r_{M j k}-r_{N j k})$, $\log \left[S_{M j k} / S_{N j k}\right]$, and $\log \left(B_{jk}\right)$ linearly on controlled covariates and take residuals as the outcome $Y_k$, the treatment $X_k$, and the instrument $Z_k$ that we are interested. Table \ref{tab:card_sum_stat} presents a summary of statistics of $Y_k$, $X_k$, and $Z_k$ for $k \in \{hs, coll\}$.

\begin{table}[htbp]
  \centering
  \caption{Summary statistics of outcome, treatment, and instrument excluding covariates' effect in \cite{card2009immigration}}
    \begin{tabular}{lrrrrrrr}
    \toprule
    \toprule
          & \multicolumn{1}{c}{Min.} & \multicolumn{1}{c}{p25} & \multicolumn{1}{c}{Med.} & \multicolumn{1}{c}{p75} & \multicolumn{1}{c}{Max.} & \multicolumn{1}{c}{Mean} & \multicolumn{1}{c}{Std. Dev.} \\
          \bottomrule
    $k$ = high school &       &       &       &       &       &       &  \\
    $Y$     & -0.168 & -0.0504 & -0.0066 & 0.0501 & 0.222 & -1.53E-17 & 0.0734 \\
    $X$     & -1.7731 & -0.7128 & -0.0466 & 0.7517 & 1.6006 & 4.92E-16 & 0.8901 \\
    $\log(Z)$     & -1.2991 & -0.5144 & -0.2437 & 0.5928 & 3.1663 & -5.77E-16 & 0.8506 \\
    $k$ = college &       &       &       &       &       &       &  \\
    $Y$     & -0.2083 & -0.0548 & -0.005 & 0.0473 & 0.3104 & 1.17E-17 & 0.0811 \\
    $X$     & -1.2634 & -0.5488 & -0.0205 & 0.5573 & 1.3352 & 1.50E-16 & 0.6594 \\
    $\log(Z)$     & -1.1203 & -0.3929 & -0.0987 & 0.2862 & 2.147 & -6.95E-16 & 0.5785 \\
    \bottomrule
    \end{tabular}
  \label{tab:card_sum_stat}
\end{table}%
  
Since the instrument $Z_{k}$ is believed to be independent of demand shocks, and the first stage equation in Eq.\eqref{model:tri_model} can be defined as a conditional quantile function, Assumptions \ref{ass:exogeneity} - \ref{ass:Vcdf} hold under this setting. Then, applying Lemma \ref{lemma:identification}, we can point identify $\mathbb{P}\left(Y_{k x} \leq y \mid V_k = p\right)$, the conditional distribution of potential wage gaps given the first-stage unobservable, where $Y_{k x}$ stands for the potential wage gaps with the log of immigrant-to-native ratio set to be $x$ in group $k$. For estimation, we first use the local linear regression to estimate the propensity score in Equation \eqref{eq:p_score}, and then plug in the estimated propensity score in Equation \eqref{eq:dist} and use local linear regression to estimate the conditional distribution of the potential outcome. We select $x$ to be the median and the 75th percentile of the treatment for both groups\footnote{As shown in Table \ref{tab:card_sum_stat}, the median of treatment $X$ is -0.05 for the high school group and -0.02 for the college group, and the 75th percentile of the treatment $X$ is 0.75 for the high school group and 0.55 for the college group.} and set $p$ equal to 0.5 to plot the estimated conditional distributions of potential wage gaps in Figure \ref{fig:est_dist_card09}. For example, the upper left panel of Figure \ref{fig:est_dist_card09} shows the conditional distributions of the potential wage gap if the log of immigrant-to-native ratio would equal the median given the first stage error set at 0.5. Notice that in the upper right panel of Figure \ref{fig:est_dist_card09}, the conditional probabilities are not always between 0 and 1. This is due to estimation error. We leave inference to future work.
\begin{figure}[!htt]
        \centering
        \caption{Estimates of the conditional distributions of potential wage gaps in \cite{card2009immigration}}
        \includegraphics[width = \textwidth]{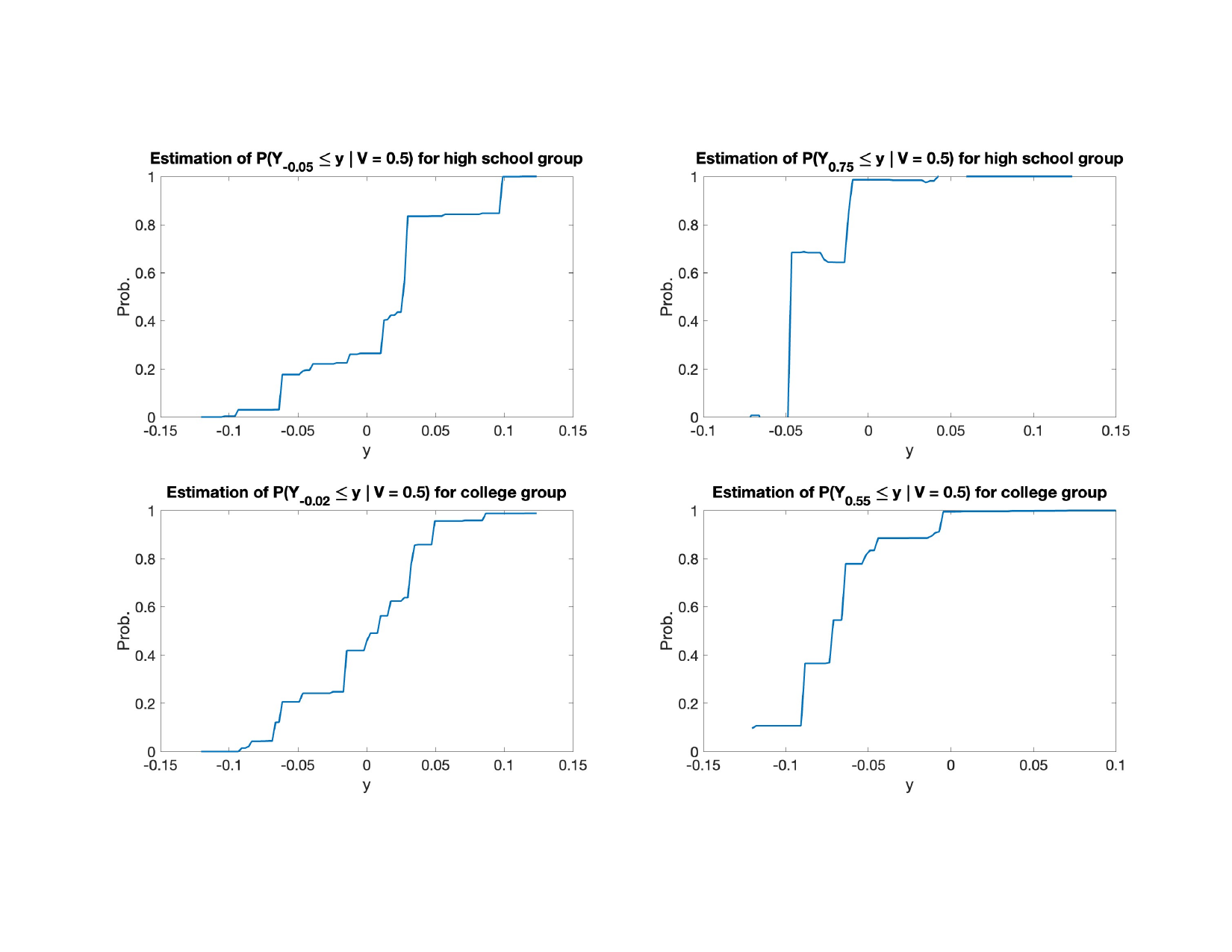}
        \label{fig:est_dist_card09}
\end{figure}

Theorem \ref{thm:MTE} allows us to identify the marginal treatment effect on potential wage gaps if changing the log of the immigrant-to-native ratio from $x'$ to $x$. To estimate the marginal treatment effects, we plug the nonparametrically estimated propensity scores into Equation \eqref{eq:expectation} and use local linear regression to estimate the conditional means in Equation \eqref{eq:expectation}. We study the marginal treatment effects on potential outcome gaps if changing the treatment value from the 25th percentile to the 75th percentile of the sample\footnote{The 25th percentile of the treatment is -0.71 for the high school group and -0.55 for the college group, and the 75th percentile is 0.75 for the high school group and 0.55 for the college group.}, and the marginal treatment effect if changing the treatment value from the 45th percentile to the 55th percentile of the sample for both groups\footnote{The 45th percentile of the treatment is -0.28 for the high school group and -0.17 for the college group, and the 75th percentile is 0.13 for the high school group and 0.09 for the college group.}. Figure \ref{fig:mte_point_card09} presents our estimated marginal treatment effects. For example, the upper left panel of Figure \ref{fig:mte_point_card09} suggests the marginal treatment effect of log immigrant-to-native ratio on the potential wage gap if changing the log immigrant-to-native ratio from its 25th percentile to the 75th percentile, for the first stage error ranging from 0 to 1. We can observe that the marginal treatment effects are heterogeneous in the first stage error $p$.

\begin{figure}[!htt]
        \centering
        \caption{Estimations of marginal treatment effects on potential wage gaps in \cite{card2009immigration}}
        \includegraphics[width = \textwidth]{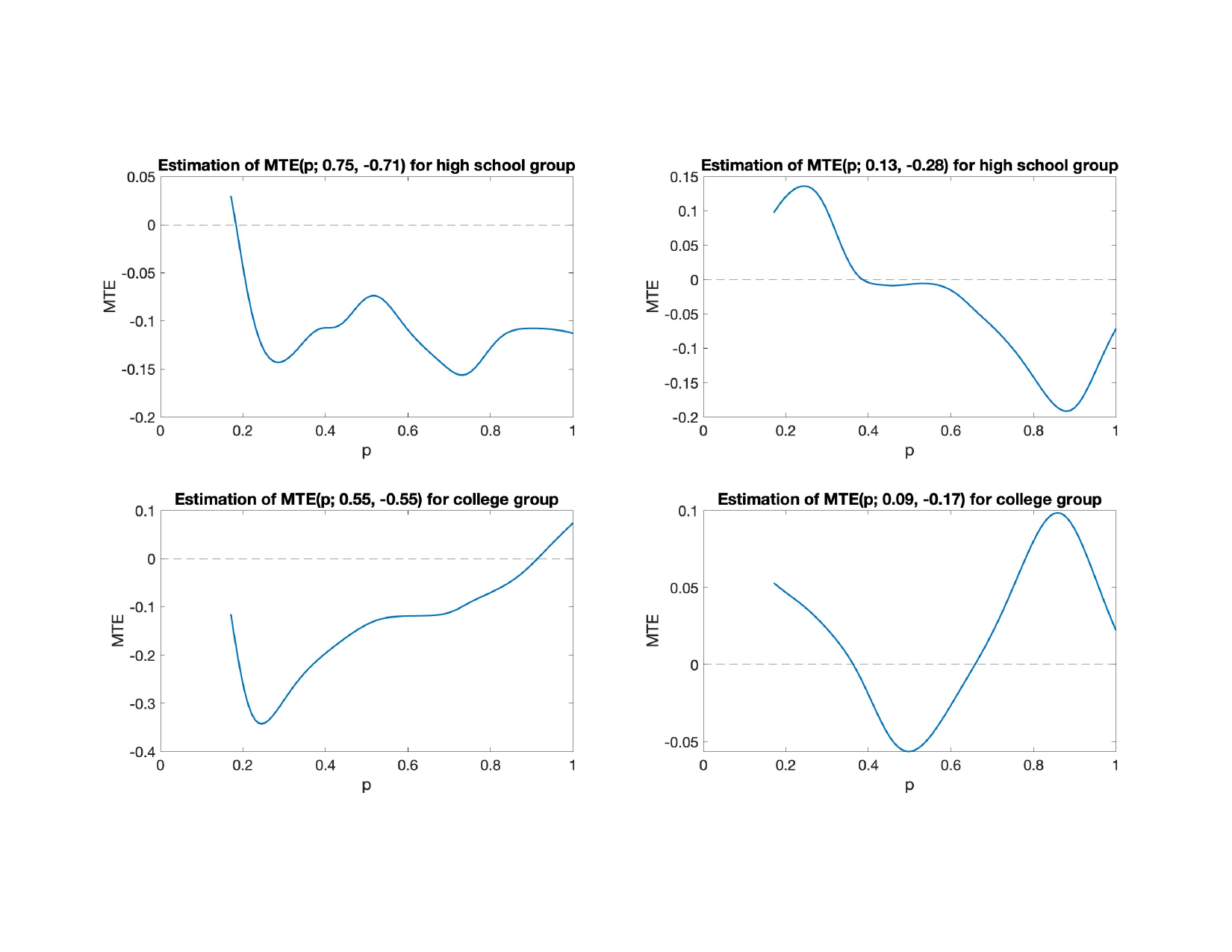}
        \label{fig:mte_point_card09}
\end{figure}

\section{Summary and Discussion}\label{sec: futurework}

We propose an easy-to-implement testing procedure for commonly used identifying assumptions in parametric separable models, including the exogeneity, homoskedasticity, and instrument validity assumptions in linear IV models. We derive a testable implication for these assumptions. We transform this testable implication into a set of conditional moment inequalities and apply the method in \cite{chernozhukov2013intersection} to implement the test. Doing this allows us to use their corresponding Stata packages to perform inference, so that researchers can easily implement our testing procedure in Stata. 

We conduct Monte Carlo simulations, and the results suggest that our test performs well in large samples. We also apply our method to test several commonly used instrumental variables in empirical studies. Our proposed test rejects the validity of the Bartik IV used in \cite{card2009immigration}, and the average price in other markets IV questioned in \cite{nevo2012identification}.

Then, we provide solutions to researchers if our test rejects their identifying assumptions. We first relax the parametric assumptions. Instead of imposing structural assumptions on the outcome equation, we consider triangular system models with fully nonparametric outcome equations. We show that if researchers maintain the instrument independence assumption under this setting, they can point identify the distributions of potential outcomes and other standard parameters of interest. 
We further discuss how to ``minimally'' relax the functional form assumptions to the extent where the IV model becomes untestable. Essentially, we find that the IV model is nontestable if no functional form assumption is made on the outcome equation, when there exists a one-to-one mapping between the continuous treatment variable, the instrument, and the first-stage unobserved heterogeneity.



\onehalfspacing
\bibliographystyle{jpe}
\renewcommand\refname{Reference}
\bibliography{Template_Bib}

\begin{thebibliography}{34}
\newcommand{\enquote}[1]{``#1''}
\providecommand{\natexlab}[1]{#1}
\providecommand{\url}[1]{\texttt{#1}}
\providecommand{\urlprefix}{URL }

\bibitem[{Acerenza, Bartalotti, and K{\'e}dagni(2023)}]{acerenza2023testing}
Acerenza, Santiago, Ot{\'a}vio Bartalotti, and D{\'e}sir{\'e} K{\'e}dagni.
  2023.
\newblock \enquote{Testing identifying assumptions in bivariate probit models.}
\newblock \emph{Journal of Applied Econometrics} 38~(3):407--422.

\bibitem[{Andrews and Shi(2013)}]{AS2013}
Andrews, Donald W.~K. and Xiaoxia Shi. 2013.
\newblock \enquote{Inference Based on Conditional Moment Inequalities.}
\newblock \emph{Econometrica} 81~(2):609--666.

\bibitem[{Andrews, Stock, and Sun(2019)}]{andrews2019weak}
Andrews, Isaiah, James~H Stock, and Liyang Sun. 2019.
\newblock \enquote{Weak instruments in instrumental variables regression:
  Theory and practice.}
\newblock \emph{Annual Review of Economics} 11:727--753.

\bibitem[{Arai et~al.(2022)Arai, Hsu, Kitagawa, Mourifi{\'{e}}, and
  Wan}]{Arai2022TestingDesigns}
Arai, Yoichi, Yu-Chin Hsu, Toru Kitagawa, Ismael Mourifi{\'{e}}, and Yuanyuan
  Wan. 2022.
\newblock \enquote{{Testing identifying assumptions in fuzzy regression
  discontinuity designs}.}
\newblock \emph{Quantitative Economics} 13~(1):1--28.

\bibitem[{Bartik(1991)}]{bartik1991benefits}
Bartik, Timothy~J. 1991.
\newblock \enquote{Who benefits from state and local economic development
  policies?} .

\bibitem[{Bierens(1982)}]{bierens1982consistent}
Bierens, Herman~J. 1982.
\newblock \enquote{Consistent model specification tests.}
\newblock \emph{Journal of Econometrics} 20~(1):105--134.

\bibitem[{Bierens(1990)}]{bierens1990consistent}
---{}---{}---. 1990.
\newblock \enquote{A consistent conditional moment test of functional form.}
\newblock \emph{Econometrica: Journal of the Econometric Society} :1443--1458.

\bibitem[{Card(2009)}]{card2009immigration}
Card, David. 2009.
\newblock \enquote{Immigration and inequality.}
\newblock \emph{American Economic Review: Papers and Proceedings} 99~(2):1--21.

\bibitem[{Chernozhukov et~al.(2015)Chernozhukov, Kim, Lee, and
  Rosen}]{chernozhukov2015implementing}
Chernozhukov, Victor, Wooyoung Kim, Sokbae Lee, and Adam~M Rosen. 2015.
\newblock \enquote{Implementing intersection bounds in Stata.}
\newblock \emph{The Stata Journal} 15~(1):21--44.

\bibitem[{Chernozhukov, Lee, and Rosen(2013)}]{chernozhukov2013intersection}
Chernozhukov, Victor, Sokbae Lee, and Adam~M Rosen. 2013.
\newblock \enquote{Intersection bounds: Estimation and inference.}
\newblock \emph{Econometrica} 81~(2):667--737.

\bibitem[{Conley, Hansen, and Rossi(2012)}]{Conley2012}
Conley, Timothy~G., Christian~B. Hansen, and Peter~E. Rossi. 2012.
\newblock \enquote{Plausibly Exogenous.}
\newblock \emph{Review of Economics and Statistics} 94~(1):260--272.

\bibitem[{Embrechts and Hofert(2013)}]{Embrechts_al2013}
Embrechts, Paul and Marius Hofert. 2013.
\newblock \enquote{A note on generalized inverses.}
\newblock \emph{Math Meth Oper Res} 77:423--432.

\bibitem[{Gunsilius(2021)}]{gunsilius2021nontestability}
Gunsilius, Florian~F. 2021.
\newblock \enquote{Nontestability of instrument validity under continuous
  treatments.}
\newblock \emph{Biometrika} 108~(4):989--995.

\bibitem[{Hansen(2022)}]{hansen2022econometrics}
Hansen, Bruce. 2022.
\newblock \emph{Econometrics}.
\newblock Princeton University Press.

\bibitem[{Hansen and Kozbur(2014)}]{hansen2014instrumental}
Hansen, Christian and Damian Kozbur. 2014.
\newblock \enquote{Instrumental variables estimation with many weak instruments
  using regularized JIVE.}
\newblock \emph{Journal of Econometrics} 182~(2):290--308.

\bibitem[{Heckman and Vytlacil(2005)}]{heckman2005structural}
Heckman, James~J and Edward Vytlacil. 2005.
\newblock \enquote{Structural equations, treatment effects, and econometric
  policy evaluation 1.}
\newblock \emph{Econometrica} 73~(3):669--738.

\bibitem[{Horowitz and Spokoiny(2001)}]{horowitz2001adaptive}
Horowitz, Joel~L and Vladimir~G Spokoiny. 2001.
\newblock \enquote{An adaptive, rate-optimal test of a parametric
  mean-regression model against a nonparametric alternative.}
\newblock \emph{Econometrica} 69~(3):599--631.

\bibitem[{Hsu, Shiu, and Wan(2023)}]{Hsu2023TestingDesigns}
Hsu, Yu-chin, Ji-liang Shiu, and Yuanyuan Wan. 2023.
\newblock \enquote{{Testing Identification Conditions of LATE in Fuzzy
  Regression Discontinuity Designs}.}
\newblock \emph{Working Paper, University of Toronto} :1--54.

\bibitem[{Huber and Mellace(2015)}]{Huber2015TestingConstraints}
Huber, Martin and Giovanni Mellace. 2015.
\newblock \enquote{{Testing Instrument Validity for LATE Identification Based
  on Inequality Moment Constraints}.}
\newblock \emph{Review of Economics and Statistics} 97~(2):398--411.
\newblock
  \urlprefix\url{https://www.jstor.org/stable/pdf/43556182.pdf?refreqid=excelsior%3A78cbc752552d22e480f2d72eb9f9c444&ab_segments=&origin=&initiator=}.

\bibitem[{Härdle and Mammen(1993)}]{hardle1993comparing}
Härdle, Wolfgang and Enno Mammen. 1993.
\newblock \enquote{Comparing nonparametric versus parametric regression fits.}
\newblock \emph{The Annals of Statistics} :1926--1947.

\bibitem[{Ichimura(1993)}]{ichimura1993}
Ichimura, Hidehiko. 1993.
\newblock \enquote{Semiparametric least squares (SLS) and weighted SLS
  estimation of single-index models.}
\newblock \emph{Journal of Econometrics} 58~(1):71--120.
\newblock
  \urlprefix\url{https://www.sciencedirect.com/science/article/pii/030440769390114K}.

\bibitem[{Imbens and Newey(2009)}]{imbens2009identification}
Imbens, Guido~W and Whitney~K Newey. 2009.
\newblock \enquote{Identification and estimation of triangular simultaneous
  equations models without additivity.}
\newblock \emph{Econometrica} 77~(5):1481--1512.

\bibitem[{K{\'e}dagni and Mourifi{\'e}(2020)}]{kedagni2020generalized}
K{\'e}dagni, D{\'e}sir{\'e} and Ismael Mourifi{\'e}. 2020.
\newblock \enquote{Generalized instrumental inequalities: testing the
  instrumental variable independence assumption.}
\newblock \emph{Biometrika} 107~(3):661--675.

\bibitem[{Kitagawa(2015)}]{kitagawa2015test}
Kitagawa, Toru. 2015.
\newblock \enquote{A test for instrument validity.}
\newblock \emph{Econometrica} 83~(5):2043--2063.

\bibitem[{Kitamura, Tripathi, and Ahn(2004)}]{kitamura2004empirical}
Kitamura, Yuichi, Gautam Tripathi, and Hyungtaik Ahn. 2004.
\newblock \enquote{Empirical likelihood-based inference in conditional moment
  restriction models.}
\newblock \emph{Econometrica} 72~(6):1667--1714.

\bibitem[{Masten and Poirier(2021)}]{Masten2021}
Masten, M.~A. and A.~Poirier. 2021.
\newblock \enquote{Salvaging Falsified Instrumental Variable Models.}
\newblock \emph{Econometrica} 89~(3):1449--1469.

\bibitem[{Mourifi{\'e} and Wan(2017)}]{mourifie2017testing}
Mourifi{\'e}, Ismael and Yuanyuan Wan. 2017.
\newblock \enquote{Testing local average treatment effect assumptions.}
\newblock \emph{Review of Economics and Statistics} 99~(2):305--313.

\bibitem[{Nevo and Rosen(2012)}]{nevo2012identification}
Nevo, Aviv and Adam~M Rosen. 2012.
\newblock \enquote{Identification with imperfect instruments.}
\newblock \emph{Review of Economics and Statistics} 94~(3):659--671.

\bibitem[{Newey(1985)}]{newey1985maximum}
Newey, Whitney~K. 1985.
\newblock \enquote{Maximum likelihood specification testing and conditional
  moment tests.}
\newblock \emph{Econometrica: Journal of the Econometric Society} :1047--1070.

\bibitem[{Pearl(1994)}]{pearl1995testability}
Pearl, Judea. 1994.
\newblock \enquote{On the testability of causal models with latent and
  instrumental variables.}
\newblock \emph{Uncertainty in Artificial Intelligence} 11:435--443.

\bibitem[{Staiger and Stock(1997)}]{staiger1997instrumental}
Staiger, Douglas and James~H Stock. 1997.
\newblock \enquote{Instrumental Variables Regression with Weak Instruments.}
\newblock \emph{Econometrica} 65~(3):557--586.

\bibitem[{Stock and Yogo(2002)}]{stock2002testing}
Stock, James~H and Motohiro Yogo. 2002.
\newblock \enquote{Testing for weak instruments in linear IV regression.}

\bibitem[{Tripathi and Kitamura(2003)}]{tripathi2003testing}
Tripathi, Gautam and Yuichi Kitamura. 2003.
\newblock \enquote{Testing conditional moment restrictions.}
\newblock \emph{The Annals of Statistics} 31~(6):2059--2095.

\bibitem[{Wang and Zivot(1998)}]{wang1998inference}
Wang, Jiahui and Eric Zivot. 1998.
\newblock \enquote{Inference on structural parameters in instrumental variables
  regression with weak instruments.}
\newblock \emph{Econometrica} :1389--1404.

\end{thebibliography}

\clearpage
\begin{appendices}

\section{Proof of Proposition \ref{prop1}} \label{app:prop1}

\begin{proof}
    To prove the sharpness, we first need to show that if the test does not reject the testable implication, there exist random variables $(\tilde{X}, \tilde{Y}, \tilde{U})$ that can be expressed in the same form as Eq. (\ref{eq:linear_iv}) with some parameters $(\tilde{\beta}_0, \tilde{\beta}_1)$ and satisfy Assumptions \ref{ass:mean_indep} - \ref{ass:relevance} with the instrument variable $Z$, and the vector $(\tilde{X}, \tilde{Y}, Z)$ has the same distribution as $(X, Y, Z)$. 

    We set 
    \begin{equation*}
        \tilde{\beta}_1=\frac{\operatorname{Cov}(Y, Z)}{\operatorname{Cov}(X, Z)}, \quad \tilde{\beta}_0=\mathbb{E}[Y]-\frac{\operatorname{Cov}(Y, Z)}{\operatorname{Cov}(X, Z)} \mathbb{E}[X],
    \end{equation*}
    and construct random variables $(\tilde{X}, \tilde{Y}, \tilde{U})$ as 
    \begin{equation*}
        \tilde{U} = Y - \tilde{\beta}_0 - \tilde{\beta}_1 X, \quad  \tilde{X} = X, \quad \tilde{Y} = \tilde{\beta}_0 + \tilde{\beta}_1 \tilde{X} + \tilde{U}.
    \end{equation*}

    Since $(\tilde{X}, \tilde{Y}, Z) = (X, \tilde{\beta}_0 + \tilde{\beta}_1 X + Y - \tilde{\beta}_0 - \tilde{\beta}_1 X, Z) = (X, Y, Z)$, the random vector $(\tilde{X}, \tilde{Y}, Z)$ has the same distribution as $(X, Y, Z)$. In addition, 
    \begin{equation*}
        \begin{aligned}
            \mathbb{E}[\tilde{U} \mid Z] & = \mathbb{E}[Y - \tilde{\beta}_0 - \tilde{\beta}_1 X \mid Z] \\
            & = \mathbb{E}[(Y - \mathbb{E}[Y]) - \frac{\operatorname{Cov}(Y, Z)}{\operatorname{Cov}(X, Z)}(X - \mathbb{E}[X]) \mid Z] \\
            &= 0
        \end{aligned}
    \end{equation*}
    where the last inequality holds because $(X, Y, Z)$ satisfy the testable implication in Eq.~(\ref{eq:implication}). Furthermore, $\operatorname{Cov}(\tilde{X}, Z) = \operatorname{Cov}(X, Z)$. 
    Therefore, if the testable implication is not rejected, there exists a data generating process $(\tilde{X}, \tilde{Y}, Z)$ that satisfies Eq. (\ref{eq:linear_iv}) and the identifying Assumptions \ref{ass:mean_indep} - \ref{ass:relevance}. Hence, we prove that the testable implication in Eq. (\ref{eq:implication}) is sharp.

    We now want to show that if the test cannot reject the condition in Eq. (\ref{eq:implication}), the actual data generating process may still not satisfy Eq. (\ref{eq:linear_iv}) jointly with Assumptions \ref{ass:mean_indep} - \ref{ass:relevance}. Suppose that condition (\ref{eq:implication}) holds. Define the parameters $(\beta_0^*, \beta_1^*)$ as
    \begin{equation*}
        \beta_1^* = \frac{\operatorname{Cov}(Y, Z)}{\operatorname{Cov}(X, Z)} + 2, \quad \beta_0^* = \mathbb{E}[Y]-\frac{\operatorname{Cov}(Y, Z)}{\operatorname{Cov}(X, Z)} \mathbb{E}[X],
    \end{equation*}
    and the random variables $(X^*, Y^*, U^*)$ as 
    \begin{equation*}
        \begin{aligned}
            & X^* = X, \\
            & U^* = Y - \beta_0^* - \beta_1^* X^*, \\
            & Y^* = \beta_0^* + \beta_1^* X^* + U^*.
        \end{aligned}
    \end{equation*}
    It is straightforward to check that 
    \begin{equation*}
        (X^*, Y^*, Z) = (X, \beta_0^* + \beta_1^* X + Y - \beta_0^* - \beta_1^* X, Z) = (X, Y, Z).
    \end{equation*}
    So, the random vector $(X^*, Y^*, Z)$ has the same distribution as the observed data $(X, Y, Z)$. However, 
    \begin{equation*}
        \begin{aligned}
            \mathbb{E}[U^* \mid Z] &= \mathbb{E}[(Y - \mathbb{E}[Y]) - \frac{\operatorname{Cov}(Y, Z)}{\operatorname{Cov}(X, Z)}(X - \mathbb{E}[X]) - 2X \mid Z] \\
            &= - 2\mathbb{E}[X \mid Z] \\
            &\neq 0,
        \end{aligned}
    \end{equation*}
    where the second equality follows from Equation (\ref{eq:implication}) and $\mathbb{E}[X \mid Z] \neq 0$ because $\operatorname{Cov}(X, Z) \neq 0$. Therefore, $(X^*, Y^*, Z)$ do not satisfy Assumption \ref{ass:mean_indep}. Hence, the IV model is nonverifiable. 
\end{proof}

\section{The Precision-Corrected Estimator} \label{app:precision-correct}
In Section \ref{sec: analytical_framework}, we are interested in testing
\begin{equation} \label{app:hypothesis}
    H_0: \theta_0 \equiv \sup _{v \in \mathcal{V}} \theta(v) \leq 0 \quad \text { v.s. } \quad H_1: \theta_0>0.
\end{equation}

To estimate $\theta_0$, a naive estimator would simply take the supremum of estimators $\hat{\theta}_j(x)$. However, in this way, we would have a finite upward sample because of estimation error. If we do not correct for this bias, we will tend to over-reject the null hypothesis. To circumvent this problem, \cite{chernozhukov2013intersection} proposed a precision-corrected estimator for $\theta_0$. Using their method to correct the finite sample bias, we can construct the test statistic as 
\begin{equation}
    \hat{\theta}_{1-\alpha} \equiv \sup _{v \in \mathcal{V}}\left\{\hat{\theta}(v)-k_{1-\alpha} \hat{s}(v)\right\},
    \label{app:clrtest}
\end{equation}
where $\hat{s}(v)$ is the standard error of $\hat{\theta}(v)$, $k_{1-\alpha} \hat{s}(v)$ is the critical value, and $k(1 - \alpha)$ is a critical value, which can be considered as a degree of precision correction here. The selection of $k(1 - \alpha)$ is based on the standardized process 
\begin{equation*}
    Z_n(v)=\frac{\theta_n(v)-\hat{\theta}_n(v)}{\sigma_n(v)}.
\end{equation*}
The finite sample distribution of $Z_n$ is unknown, but the paper shows that it can be strongly approximated by a standardized Gaussian process $Z_n^*$ satisfies 
\begin{equation*}
    \bar{a}_n \sup _{v \in \mathcal{V}}\left|Z_n(v)-Z_n^*(v)\right|=o_p(1),
\end{equation*}
where $\bar{a}_n$ is a sequence of constants that $\bar{a}_n \bar{\sigma}_n=O(1)$, $\bar{\sigma}_n:=\sup _{v \in \mathcal{V}} \sigma_n(v)$. The paper shows that $\bar{a}_n$ can be bounded by $\sqrt{\log n}$ if using series or kernel estimators. 

Also, if taking supremum over the entire support $\mathcal{V}$, the inference could be asymptotically valid but conservative. To improve the power of the test, \cite{chernozhukov2013intersection} develop an adaptive inequality selection procedure to select conditional moments close to the boundary. Specifically, conditional moments close to the boundary are those in the set 
\begin{equation*}
    V_n:=\left\{v \in \mathcal{V}: \theta_n(v) \geq \theta_{n 0}-\kappa_n \sigma_n(v)\right\}, 
\end{equation*}
where $\kappa_n:=\kappa_{n, \mathcal{V}}\left(\gamma_n^{\prime}\right)$ with $\gamma_n^{\prime} \nearrow 1$, $\kappa_{n, \mathrm{V}}(\gamma)$ is defined as the $\gamma$-quantile of $\sup _{v \in \mathrm{V}} Z_n^*(v)$, $\kappa_{n, \mathrm{V}}(\gamma):=Q_\gamma\left(\sup _{v \in \mathrm{V}} Z_n^*(v)\right)$. The paper proposes methods that can estimate set $V_n$ with $\hat{V}_n$, then $k_{n, \hat{V}_n}(1-\alpha)$ can be obtained from simulated distribution 
\begin{equation*}
    k_{n, \hat{V}_n}(1-\alpha)=Q_{1-\alpha}\left(\sup _{v \in \hat{V}_n} Z_n^{\star}(v) \mid \mathcal{D}_n\right),
\end{equation*}
where $Z_n^{\star}(v)$ is simulated Gaussian process $Z_n^*$, and $\mathcal{D}_n$ denotes the simulation data.

\section{Validity of Using Plug-in Estimator} \label{app:validity_plug_in}
In this section, we will show that the asymptotic properties of the test are preserved if we use the plug-in estimator for parameter $\beta_0$ and $\beta_1$ in Equation \eqref{eq:implication3}: $\sup_{z}\mathbb E[Y-\beta_0-\beta_1 X \vert Z=z] \leq 0$ and $\sup_{z}\mathbb E[-Y+\beta_0+\beta_1 X \vert Z=z] \leq 0$. The validity of the plug-in approach in other testable implications in this paper can be proved in the same way.

\begin{proof}
    We show the proof of the inequality $\sup_{z}\mathbb E[Y-\beta_0-\beta_1 X \vert Z=z] \leq 0$, as the proof for the other inequality is similar. Define $\mathbb{E}[Y \mid Z = z] = g(z)$, and $r(z)=\mathbb E[X\mid Z=z]$. Then, we have
    $$\sup_{z}\mathbb E[Y-\beta_0-\beta_1 X \vert Z=z] \leq 0 \Longleftrightarrow \sup_{z} \{g(z)-\beta_0-\beta_1 r(z)\} \leq 0$$
    We can estimate $g(z)$ and $r(z)$ using kernel estimation with bandwidth $h_n$ such that $h_n \rightarrow 0$ and $n h_n \rightarrow \infty$ as $n \rightarrow \infty$. We denote by $\hat{g}(z)$ a nonparametric estimator of $g(z)$, and by $\hat{r}(z)$ a nonparametric estimator of $r(z)$. 
    We need to show that the asymptotic distribution of $\sqrt{n h_n}\left((\hat{g}(z) -\beta_0-\beta_1 \hat{r}(z))-(g(z)-\beta_0-\beta_1 r(z))\right)$ is unchanged if we replace $\beta_0$ and $\beta_1$ by their estimators $\hat{\beta}_0$ and $\hat{\beta}_1$, respectively. We have
    \begin{eqnarray*}
\sqrt{n h_n}\left((\hat{g}(z) -\beta_0-\beta_1 \hat{r}(z))-(g(z)-\beta_0-\beta_1 r(z))\right) &=&\sqrt{n h_n}\left(\hat{g}(z)-g(z)\right) -\beta_1 \sqrt{n h_n}\left(\hat{r}(z))-r(z))\right),
    \end{eqnarray*}
    and
\begin{eqnarray*}
\sqrt{n h_n}\left((\hat{g}(z) -\hat{\beta}_0-\hat{\beta}_1 \hat{r}(z))-(g(z)-\beta_0-\beta_1 r(z))\right) &=&\sqrt{n h_n}\left(\hat{g}(z)-g(z)\right) -\beta_1 \sqrt{n h_n}\left(\hat{r}(z))-r(z))\right)\\
&-& \sqrt{h_n}\left(\sqrt{n}(\hat{\beta}_0-\beta_0)\right)\\
&-& \sqrt{h_n}\left(\sqrt{n}(\hat{\beta}_1-\beta_1)(\hat{r}(z)-r(z))\right)\\
&-& \sqrt{h_n}\left(\sqrt{n}(\hat{\beta}_1-\beta_1)r(z)\right).
\end{eqnarray*}
Since $\sqrt{n}(\hat{\beta}_0-\beta_0)\xrightarrow{\enskip d\enskip} N(0,V_0)$, $\sqrt{n}(\hat{\beta}_1-\beta_1)\xrightarrow{\enskip d\enskip} N(0,V_1)$, and $\hat{r}(z)\xrightarrow{\enskip p\enskip} r(z),$ we have $\sqrt{n}(\hat{\beta}_0-\beta_0)=O_p(1)$, $\sqrt{n}(\hat{\beta}_1-\beta_1)=O_p(1)$, $\hat{r}(z)-r(z)=o_p(1)$. We also know that $\sqrt{h_n}=o(1)=o_p(1)$. Therefore, by product rule $(O_p(1)*o_p(1)=o_p(1))$, we have
\begin{eqnarray*}
-\sqrt{h_n}\left(\sqrt{n}(\hat{\beta}_0-\beta_0)\right) &=& o_p(1),\\
-\sqrt{h_n}\left(\sqrt{n}(\hat{\beta}_1-\beta_1)(\hat{r}(z)-r(z))\right)&=& o_p(1),\\
-\sqrt{h_n}\left(\sqrt{n}(\hat{\beta}_1-\beta_1)r(z)\right) &=& o_p(1).
\end{eqnarray*}
Finally, since $o_p(1)+o_p(1)=o_p(1),$ we conclude that
\begin{eqnarray*}
\sqrt{n h_n}\left((\hat{g}(z) -\hat{\beta}_0-\hat{\beta}_1 \hat{r}(z))-(g(z)-\beta_0-\beta_1 r(z))\right) &=&\sqrt{n h_n}\left(\hat{g}(z)-g(z)\right) -\beta_1 \sqrt{n h_n}\left(\hat{r}(z))-r(z))\right)\\
&+& o_p(1).
\end{eqnarray*}
Hence, 
\begin{eqnarray*}
\sqrt{n h_n}\left((\hat{g}(z) -\hat{\beta}_0-\hat{\beta}_1 \hat{r}(z))-(g(z)-\beta_0-\beta_1 r(z))\right) &=&\sqrt{n h_n}\left((\hat{g}(z) -\beta_0-\beta_1 \hat{r}(z))-(g(z)-\beta_0-\beta_1 r(z))\right)\\
&+& o_p(1).
\end{eqnarray*}
From the asymptotic equivalence lemma (AEL), $\sqrt{n h_n}\left((\hat{g}(z) -\hat{\beta}_0-\hat{\beta}_1 \hat{r}(z))-(g(z)-\beta_0-\beta_1 r(z))\right)$ and $\sqrt{n h_n}\left((\hat{g}(z) -\beta_0-\beta_1 \hat{r}(z))-(g(z)-\beta_0-\beta_1 r(z))\right)$ have the same asymptotic distribution.

    The above conclusion also holds if we estimate the condition expectations with series estimation. \cite{chernozhukov2013intersection} prove that the convergent rate of series estimator $\hat{g}(z)$ is $\sqrt{\zeta_n^2 / n}$, where $\zeta_n \propto \sqrt{K_n}$ with the number of series terms $K_n$. The similar derivations follow if we relace $\sqrt{h_n}$ with $1 / \zeta_n \rightarrow 0$.
\end{proof}

\section{Simulation Results}\label{app:simulation}

In this section, we present all the simulation results that are mentioned in Section \ref{sec: MonteCarlo}. 

\subsection{Size of Tests} \label{app:size}

To investigate the performance of our testing procedure in the linear model under the exogeneity assumption of a regressor, we consider the specification
\begin{equation}
    Y_i = \beta_0 + \beta_1 X_i + U_i,
    \label{simu:sizeolshomo}   
\end{equation}
where $\beta_0 = 0$ and $\beta_1 = 2$. We randomly draw $X_i \stackrel{\text { i.i.d. }}{\sim} \mathcal{U}[-3, 3]$, $U_i \stackrel{\text { i.i.d. }}{\sim} \mathcal{N}(0,1)$ for $i = 1, \ldots, n$, and $X_i$ is independent of $U_i$, where we use $\mathcal{U}$ to denote the uniform distribution, and $\mathcal{N}$ to denote the normal distribution. 
Table \ref{tab:sizeols} shows the test results for Equation \eqref{test:ols3} in which $Z$ is replaced by $X$.
\begin{table}[!htbp] 
    \caption{Rejection Rates when OLS identifying assumptions hold}
    \centering
    \begin{tabular}{l|ccc}
    & \multicolumn{3}{c}{Significance level}   \\
    Sample size &$10\%$&$5\%$  &$1\%$\\  
    \midrule
    \midrule
    $n=200$ &$10.8\%$&$8.4\%$&$3.6\%$  \\ 
    $n=500$ &$9.0\%$&$5.2\%$&$2.0\%$   \\
    $n=1000$ &$5.8\%$&$1.6\%$&$0.6\%$    \\   
    $n=2000$ &$3.8\%$&$1.6\%$&$0.4\%$    \\ \hline
    \end{tabular}%
    \begin{center}
    \footnotesize{Based on 500 replications. Use series regression to estimate conditional expectations.}
    \end{center}
    \label{tab:sizeols}
\end{table}
As we can see, the test asymptotically controls the nominal size since all three rejection probabilities are respectively less than all three significance levels when the sample size is sufficiently large ($n \geq 1000$).  

Under this DGP, the homoskedasticity assumption holds. We implement our testing procedure on the implications in Equation \eqref{test:olshomo} in which we replace $Z$ by $X$.
Table \ref{tab:sizeolshomo} shows the rejection rates at each significance level with different sample sizes $n$.
\begin{table}[!htbp] 
    \centering
    \caption{Rejection rates when homoskedasticity holds}
    \begin{tabular}{l|ccc}
    & \multicolumn{3}{c}{Significance level}   \\ 
    Sample size &$10\%$&$5\%$  &$1\%$\\  
    \midrule
    \midrule
    $n=200$ &$39.8\%$&$31.6\%$&$22.0\%$  \\ 
    $n=500$ &$19.4\%$&$14.6\%$&$8.8\%$   \\
    $n=1000$ &$18.0\%$&$11.6\%$&$4.8\%$    \\   
    $n=2000$ &$12.0\%$&$6.2\%$&$2.2\%$    \\
    $n=3000$ &$8.6\%$&$6.2\%$&$2.8\%$  \\ \hline
    \end{tabular}%
    \begin{center}
    \footnotesize{Based on 500 replications. Use series regression to estimate the conditional expectations.}
    \end{center}
    \label{tab:sizeolshomo}
\end{table}
When incorporating the homoskedasticity condition into the testing procedure, the nominal size is still controlled when the sample size is large $(n>3000)$. This result is consistent with what the theory suggests.     




We also consider a nonlinear least squares specification where the exogeneity assumption of a regressor holds. We generate the data as
\begin{equation}
    Y_i = \beta_0 + \beta_1 X_i^{(\lambda)}+U_i,  
    \label{simu:sizenlls}
\end{equation}
where $\beta_0 = 0$, $\beta_1 = 2$, and 
\begin{equation}
        X_i^{(\lambda)} = \begin{cases}\frac{X_i^\lambda-1}{\lambda}, & \text { if } \lambda \neq 0 \\ \log (X_i), & \text { if } \lambda=0\end{cases}
\end{equation}
for $\lambda = 0, -1, 1$. We draw $U_i \stackrel{\text { i.i.d. }}{\sim} \mathcal{N}(0, 1)$, $X_i \stackrel{\text { i.i.d. }}{\sim} \mathcal{U} (0, 10]$, and $X_i$ is independent of $U_i$. Simulation results are presented in Table \ref{tab:sizenls}.

\begin{table}[!htbp] 
    \caption{Rejection rates when NLLS identifying assumptions hold}
    \centering
      \begin{tabular}{r|l|rrr}
      \multicolumn{1}{c|}{$\lambda$} & \multicolumn{1}{c|}{Sample size} & \multicolumn{3}{c}{Significance level} \\ 
            &       & \multicolumn{1}{c}{10\%} & \multicolumn{1}{c}{5\%} & \multicolumn{1}{c}{1\%} \\ 
      \midrule
      \midrule
      0     & n = 200 & 14.6\% & 11.4\% & 6.6\% \\
            & n = 500 & 8.2\% & 4.8\% & 1.6\% \\
            & n = 1000 & 7.8\% & 4.0\%   & 0.8\% \\
            & n = 2000 & 5.4\% & 2.4\% & 0.4\% \\ 
      \midrule
      -1    & n = 200 & 14.8\% & 11.6\% & 6.0\% \\
            & n = 500 & 9.4\% & 5.6\% & 1.6\% \\
            & n = 1000 & 8.4\% & 4.6\% & 1.0\% \\
            & n = 2000 & 7.8\% & 3.8\% & 0.6\% \\ 
      \midrule
      1     & n = 200 & 16.0\%  & 11.4\% & 8.0\% \\
            & n = 500 & 7.2\% & 4.8\% & 1.0\% \\
            & n = 1000 & 6.4\% & 3.2\% & 0.6\% \\
            & n = 2000 & 4.4\% & 2.0\%   & 0.2\% \\ 
      \bottomrule
      \end{tabular}%
      \begin{center}
      \footnotesize{Based on 500 replications. Use series regression to estimate conditional expectations.}
      \end{center}
      \label{tab:sizenls}
\end{table}%

All the tests under the null hypothesis show that the rejection rates are less than their respective significance levels, which suggests that our testing procedure controls the size.

\subsection{Power of Tests} \label{app:power}
We also apply our testing procedure to test the linear model that violates homoskedasticity. We consider the specification in Equation \eqref{simu:sizeolshomo}, and generate $X_i$, $i = 1, \ldots, 1000$, i.i.d from $\mathcal U[-3,3]$, $U \stackrel{\text { i.i.d. }}{\sim} \mathcal N(0,1+\rho / 9 \cdot X^2)$ with $\rho = 0.1, 0.3, 0.5, 0.7, 0.9$. This DGP does not satisfy the homoskedasticity assumption. Table \ref{tab:powerolshomo} shows the simulation results.
\begin{table}[!htbp] 
    \centering    
    \caption{Rejection rates when homoskedasticity fails}
    \begin{tabular}{l|ccc}
    & \multicolumn{3}{c}{Significance level}   \\ 
    Sample size &$10\%$&$5\%$  &$1\%$\\  
    \midrule
    \midrule
    $\rho=0.1$ &$19.2\%$&$12.6\%$&$5.0\%$   \\
    $\rho=0.3$ &$53.4\%$&$41.6\%$&$26.4\%$    \\  
    $\rho=0.5$ &$92.8\%$&$87.6\%$&$72.8\%$  \\ 
    $\rho=0.7$ &$99.8\%$&$99.4\%$&$97.2\%$ \\
    $\rho=0.9$ &$100\%$&$100\%$&$100\%$  \\ \hline
    \end{tabular}%
    \begin{center}
    \footnotesize{Based on 500 replications. Use series regression to estimate the conditional expectations.}
    \end{center}
    \label{tab:powerolshomo}
\end{table}

We consider a specification of a linear model that violates OLS exogeneity assumption. In the model \eqref{simu:sizeolshomo}, we generate $X_i$, $i = 1, \ldots, 1000$, from $X_i \stackrel{\text { i.i.d. }}{\sim} \mathcal U{[-3,3]}$. Then, we generate the error term $U_i = L/\sigma \cdot \phi(X_i / \sigma) + \tilde{U}_i$, where $\tilde{U}_i  = \min\{\max\{-3, V_i\}, 3\}$, and $V_i \stackrel{\text { i.i.d. }}{\sim} \mathcal N(0, 1)$. Similarly, $L$ measures the average deviation of $\mathbb{E}[U_i \mid X_i]$ from zero, and $\sigma$ implies the shape of the conditional mean function. Table \ref{tab:power_ols} presents the rejection rates under this DGP.

\begin{table}[htbp]
    \centering
    \caption{Rejection rate when OLS assumptions fail}
      \begin{tabular}{c|c|rrr}
            &       & \multicolumn{3}{c}{Significance Level} \\
      $L$     & $\sigma$ & \multicolumn{1}{c}{10\%} & \multicolumn{1}{c}{5\%} & \multicolumn{1}{c}{1\%} \\
      \midrule
      \midrule
      \multirow{4}[2]{*}{0.1} & 1     & 16.80\% & 8.60\% & 3.20\% \\
            & 0.5   & 18.80\% & 11.20\% & 3.80\% \\
            & 0.25  & 23.00\% & 15.00\% & 5.40\% \\
            & 0.1   & 24.80\% & 17.40\% & 7.20\% \\
      \midrule
      \multirow{4}[1]{*}{0.5} & 1     & 38.20\% & 29.40\% & 11.60\% \\
            & 0.5   & 85.80\% & 78.00\% & 57.00\% \\
            & 0.25  & 100.00\% & 100.00\% & 99.20\% \\
            & 0.1   & 99.80\% & 99.60\% & 99.40\% \\
      \midrule
      \multirow{4}[0]{*}{1} & 1     & 85.40\% & 77.40\% & 50.80\% \\
            & 0.5   & 100.00\% & 100.00\% & 100.00\% \\
            & 0.25  & 100.00\% & 100.00\% & 100.00\% \\
            & 0.1   & 100.00\% & 100.00\% & 100.00\% \\
        \bottomrule
      \end{tabular}%
      \begin{center}
        \footnotesize{Based on 500 replications. Use series regression to estimate the conditional expectations.}
    \end{center}
    \label{tab:power_ols}%
\end{table}%

We also consider a nonlinear model where exogeneity assumption fails. We use the same specification in Equation (\ref{simu:sizenlls}) with $\lambda = 0$. We set $X_i = \max\{\tilde{X}_i, 0\}$ with $\tilde{X}_i \stackrel{\text { i.i.d. }}{\sim} \mathcal{U}[-3, 3]$. We also generate $U_i = L / \sigma \cdot \phi(\tilde{X}_i /\sigma) + \tilde{U}_i$, where $\tilde{U}_i = \min\{\max\{-3, V_i\}, 3\}$ with $V_i \sim \mathcal{N}(0, 1)$. Clearly, the exogeneity assumption fails in this DGP. 

Table \ref{tab:power_nls} contains the simulation results in this situation. As discussed in Section 4, our testing procedure becomes more powerful when the deviation from zero is more significant. Also, our test has much higher rejection rates when $\mathbb{E}[U_i \mid X_i]$ is nonflat.

\begin{table}[htbp]
    \centering
    \caption{Rejection rate when NLLS assumptions fail in Box-Cox model}
      \begin{tabular}{c|c|rrr}
            &       & \multicolumn{3}{c}{Significance Level} \\
      $L$     & $\sigma$ & \multicolumn{1}{c}{10\%} & \multicolumn{1}{c}{5\%} & \multicolumn{1}{c}{1\%} \\
      \midrule
      \midrule
      \multirow{4}[2]{*}{0.1} & 1     & 15.20\% & 10.80\% & 4.00\% \\
            & 0.5   & 15.80\% & 10.80\% & 4.00\% \\
            & 0.25  & 16.80\% & 10.60\% & 3.80\% \\
            & 0.1   & 23.60\% & 15.80\% & 8.20\% \\
      \midrule
      \multirow{4}[1]{*}{0.5} & 1     & 20.40\% & 13.40\% & 6.00\% \\
            & 0.5   & 23.20\% & 16.40\% & 7.60\% \\
            & 0.25  & 30.40\% & 21.00\% & 9.20\% \\
            & 0.1   & 98.40\% & 98.00\% & 94.40\% \\
      \midrule
      \multirow{4}[0]{*}{1} & 1     & 33.40\% & 24.60\% & 14.00\% \\
            & 0.5   & 41.20\% & 31.20\% & 18.60\% \\
            & 0.25  & 74.20\% & 63.00\% & 35.20\% \\
            & 0.1   & 100.00\% & 100.00\% & 100.00\% \\
        \bottomrule
      \end{tabular}%
      \begin{center}
        \footnotesize{Based on 500 replications. Use local linear regression to estimate the conditional expectations.}
    \end{center}
    \label{tab:power_nls}%
\end{table}%

\subsection{Comparison With Overidentification Tests} \label{app:comparision}

This section compares our testing method with common overidentification tests when instruments fail the mean independence assumption in linear and nonlinear models. We first consider the same linear model in Section \ref{sec:power}. Table \ref{tab:power_iv_overid} shows the rejection rates of our test and Sargan's test based on $500$ simulation times. Compared with our testing method, under the same DGP, the overidentification test has lower rejection rates in most cases. For example, under the alternative $U_i=L / \sigma \cdot \phi\left(Z_i / \sigma\right)+\tilde{U}_i$ with $L = 0.5$ and $\sigma = 0.1$, our test has $100\%$ rejection rates at all levels $\alpha = 10\%, 5\%, 1\%$, while the overidentification test has rejection rate $77.40\%$ at level $\alpha = 10\%$, $68.80\%$ at level $\alpha = 5\%$, and $44.40\%$ at level $\alpha = 44.40\%$. The results show that the mean independence assumption may not hold even though the overidentification test does not reject it. Therefore, if researchers apply the overidentification test in practice and cannot reject the mean independence assumption, the assumption still may fail, and we suggest using our testing procedure to test the instrument validity.

\begin{table}[htbp]
    \centering
    \caption{Rejection rates when instrument assumptions fail in linear model with overidentification test}
      \begin{tabular}{c|c|rrr}
            &       & \multicolumn{3}{c}{Significance Level} \\
      $L$     & $\sigma$ & \multicolumn{1}{c}{10\%} & \multicolumn{1}{c}{5\%} & \multicolumn{1}{c}{1\%} \\
      \midrule
      \midrule
      \multirow{4}[2]{*}{0.1} & 1     & 13.60\% & 9.60\% & 4.00\% \\
            & 0.5   & 16.20\% & 7.40\% & 1.20\% \\
            & 0.25  & 10.60\% & 6.20\% & 2.00\% \\
            & 0.1   & 12.60\% & 6.80\% & 1.40\% \\
      \midrule
      \multirow{4}[2]{*}{0.5} & 1     & 50.80\% & 35.00\% & 17.20\% \\
            & 0.5   & 77.40\% & 66.80\% & 47.60\% \\
            & 0.25  & 81.20\% &  69.40\% & 43.20\% \\
            & 0.1   & 77.40\% & 68.80\% & 44.40\% \\
      \midrule
      \multirow{4}[2]{*}{1} & 1     & 97.20\% & 95.60\% & 88.00\% \\
            & 0.5   & 99.80\% & 99.60\% & 98.80\% \\
            & 0.25  & 99.60\% & 99.60\% & 98.80\% \\
            & 0.1   & 99.60\% & 99.60\% & 99.00\% \\
      \bottomrule
      \end{tabular}%
    \label{tab:power_iv_overid}%
    \begin{center}
        \footnotesize{Based on 500 replications. Use the Sargan's test statistic.}
    \end{center}
\end{table}%

 Meanwhile, we compare our results with the overidentification test under the same nonlinear model in that the instrument violates the mean independence assumption. We select the instrumental function $h(Z)=\left(Z, Z^2, Z^3\right)^{\prime}$ to convert the conditional moment restriction in Assumption \ref{ass:mean_indep} to the unconditional one and estimate the parameters $\beta_0, \beta_1$ using the two-step generalized method of moments. Then we apply Hansen's overidentification test to check the condition $\mathbb{E}\left[h\left(Z_i\right) U_i\right]=0$. We conduct $500$ simulations with different sample sizes and plot the rejection rates of our conditional moment inequality approach and Hansen's test at 5\% significance level in Figure \ref{fig:power_curve_nonlinear}. We can observe a similar pattern in Figure \ref{fig:power_curve_nonlinear} as in Figure \ref{fig:power_curve_linear}. If the instrument mean independence assumption is violated in the nonlinear model, the rejection rates of both tests increase with the sample size and approach 1 as the sample size becomes large, so that both tests are consistent. Also, our conditional moment inequality test achieves a higher power than Hansen's test for any sample size. Therefore, we suggest using our testing procedure even when the overidentification test cannot reject the mean independence assumption in the nonlinear model. We also compare the rejection rates of conditional moment inequality and Hansen's tests for other values of $L$ and $\sigma$ when the mean independence assumption fails. 

\begin{figure}[!htt]
        \centering
        \caption{Power curves for testing IV assumptions in nonlinear model}
        \includegraphics[width = 0.8\textwidth]{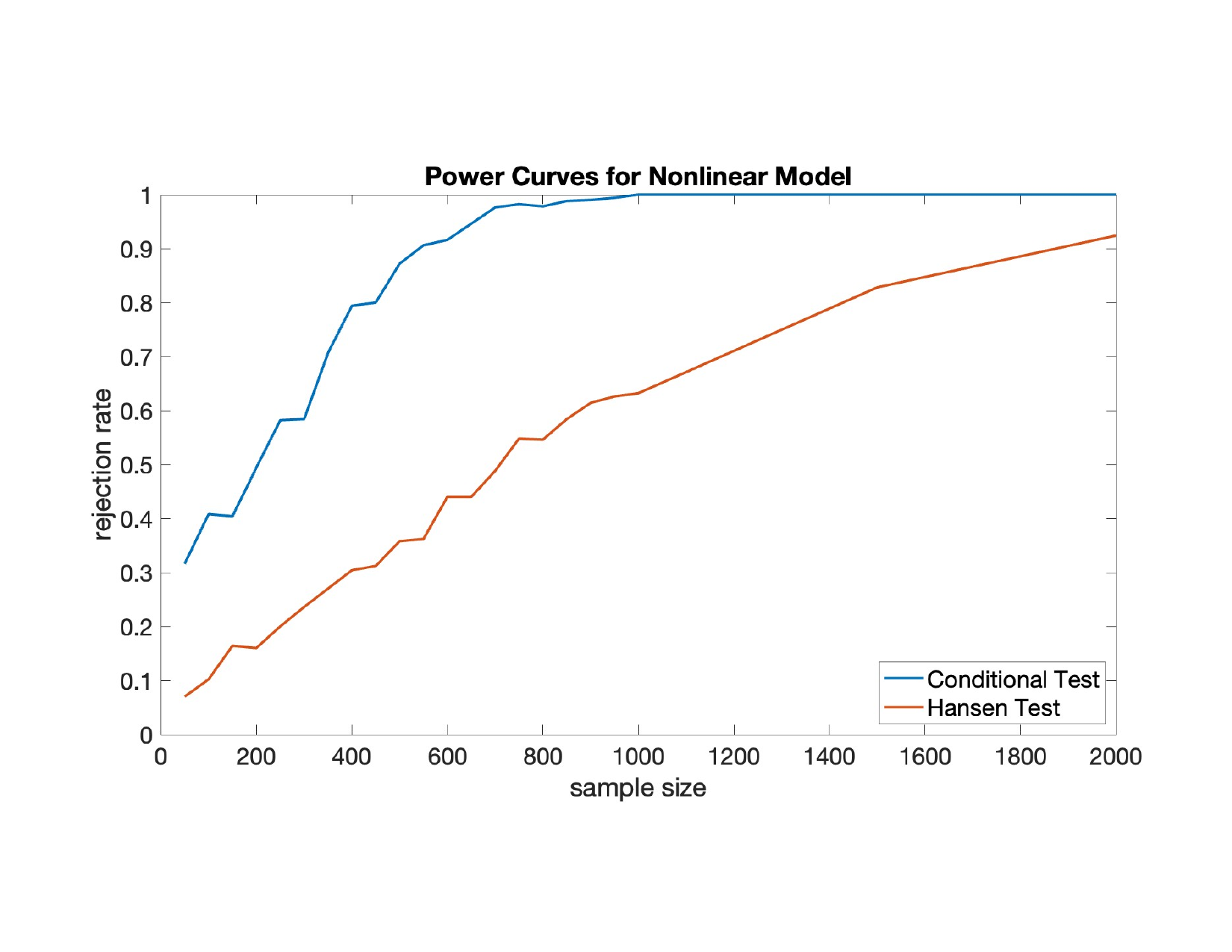}
        \label{fig:power_curve_nonlinear}
\end{figure}

We compare our testing method with Hansen's test under the same alternative in Section \ref{sec:power}. Table \ref{tab:power_iv_overid_boxcox} presents the rejection rates of using Hansen's test based on $500$ simulation times. Comparing with results in Table \ref{tab:power_nls}, the overidentification test has lower rejection rates than our test under most DGPs we consider. Therefore, our test can also achieve higher power in testing the mean independence assumption under nonlinear models than overidentification tests, and we suggest using our testing procedure even when the overidentification test cannot reject the mean independence assumption.

\begin{table}[htbp]
    \centering
    \caption{Rejection rates when instrument assumptions fail in Box-Cox model with overidentification test}
      \begin{tabular}{c|c|rrr}
            &       & \multicolumn{3}{c}{Significance Level} \\
      $L$     & $\sigma$ & \multicolumn{1}{c}{10\%} & \multicolumn{1}{c}{5\%} & \multicolumn{1}{c}{1\%} \\
      \midrule
      \midrule
      \multirow{4}[2]{*}{0.1} & 1     & 11.60\% & 4.40\% & 1.00\% \\
            & 0.5   & 12.60\% & 6.00\% & 1.00\% \\
            & 0.25  & 17.60\% & 8.40\% & 2.40\% \\
            & 0.1   & 15.60\% & 8.20\% & 2.00\% \\
      \midrule
      \multirow{4}[2]{*}{0.5} & 1     & 45.60\% & 32.80\% & 14.80\% \\
            & 0.5   & 72.00\% & 60.60\% & 38.80\% \\
            & 0.25  & 74.20\% &  62.00\% & 37.60\% \\
            & 0.1   & 72.40\% & 60.80\% & 35.20\% \\
      \midrule
      \multirow{4}[2]{*}{1} & 1     & 97.00\% & 91.80\% & 79.60\% \\
            & 0.5   & 99.80\% & 99.60\% & 97.60\% \\
            & 0.25  & 100.00\% & 99.80\% & 98.00\% \\
            & 0.1   & 99.80\% & 99.40\% & 96.60\% \\
      \bottomrule
      \end{tabular}%
    \label{tab:power_iv_overid_boxcox}%
    \begin{center}
        \footnotesize{Based on 500 replications. Use the Hansen's test statistic.}
    \end{center}
\end{table}%

\section{Proof of Identification Results} \label{app:id_results}

 \subsection{Proof of Results in Section \ref{sec:relax_parametric_assumption}} \label{app:proof_6.1}
 
 \subsubsection{Proof of Lemma \ref{lemma:pscore}} \label{app:proof_lemma3}
 For any $x \in \mathcal{X}$ and $z \in \mathcal{Z}$, we have 
     \begin{equation*} 
         \begin{aligned}
             P(z, x) =& \mathbb{P}(h(Z, V) \leq x \mid Z = z) \\
             =& \mathbb{P}(h(z, V) \leq x \mid Z = z) \\
             =& \mathbb{P}(h(z, V) \leq x) \\
             =& \mathbb{P}(V  \leq h^{-1}_{z}(x)) \\
             =& h^{-1}_{z}(x).
         \end{aligned}
     \end{equation*}
 The third line holds under the assumption $Z \indep V$. The fourth line follows from Assumption \ref{ass:mon}, and the last equality is by normalizing $V \sim \mathcal{U}[0, 1]$ under Assumption \ref{ass:Vcdf}.
 
 \subsubsection{Proof of Lemma \ref{lemma:identification}} \label{app:proof_lemma4}
 For any $x \in \mathcal{X}$, $p \in [0,1]$, and set $A \in \mathcal{F}_Y$,
     \begin{equation*} 
         \begin{aligned}
             & \mathbb{P}(Y \in A \mid X = x, P(Z, X) = p) \\
             =& \mathbb{P}(Y_x \in A \mid X = x, P(Z, x) = p) \\
             =& \mathbb{P}(Y_x \in A \mid h(Z, V) = x, h^{-1}_Z(x) = p) \\
             =& \mathbb{P}(Y_x \in A \mid V = h^{-1}_Z(x), h^{-1}_Z(x) = p) \\
             =& \mathbb{P}(Y_x \in A \mid V = p, h^{-1}_Z(x) = p) \\
             =& \mathbb{P}(Y_x \in A \mid V = p),
         \end{aligned}
     \end{equation*}
 where the second equality follows from the conclusion in Lemma \ref{lemma:pscore}, and the last equality holds under $Z \indep U$, which is imposed in Assumption \ref{ass:exogeneity}.
 
 Similarly, we can point identify the conditional expectation $\mathbb{E} \left[Y_x \mid V = p\right]$. For $x \in \mathcal{X}$ and $p \in [0,1]$, we have
     \begin{equation*}
         \begin{aligned}
             &\mathbb{E}\left[Y \mid X = x, P(Z, X) = p\right] \\
             =& \mathbb{E}\left[Y_x \mid X = x, P(Z, x) = p\right] \\
             =& \mathbb{E}\left[Y_x \mid h(Z, V) = x, h^{-1}_Z(x) = p\right] \\
             =& \mathbb{E}\left[Y_x \mid V = h^{-1}_Z(x), h^{-1}_Z(x) = p\right] \\
             =& \mathbb{E}\left[Y_x \mid V = p, h^{-1}_Z(x) = p\right] \\
             =& \mathbb{E}\left[Y_x \mid V = p\right],
         \end{aligned} 
     \end{equation*}
 where the second equality is implied by Lemma \ref{lemma:pscore}, and the last equality holds under $Z \indep U \mid V$.

 \subsubsection{Proof of Proposition \ref{prop:nofurther}}\label{proof:prop:nofurther}
\begin{proof}
    Let $\mathcal{P}_{YXZ}$ denote the joint distribution of $(Y,X,Z)$. To prove the proposition, it is sufficient to show that given a $\mathcal{P}_{YXZ}$ that satisfies \eqref{eq:testableimplication} and Condition \ref{ass:onetoone}, one can always construct variables $U^*$ and $V^*$ and functions $g^*$ and $h^*$ from $\mathcal{P}_{YXZ}$ such that if $U=U^*$,$V=V^*$,$g=g^*$ and $h=h^*$ then Assumptions \ref{ass:exogeneity}-\ref{ass:Vcdf} are satisfied and Equation \eqref{model:tri_model} holds with probability one.

    First, let $V^*:=F_{X|Z}(X)=P(Z,X)$ and $h^*(z,v):=Q_{X|Z=z}(v)$. With these definitions, we have that $X=h^*(Z,V^*)$ a.s.\footnote{See Lemma \ref{lemmaquantile} below for a general proof of this (which requires no assumptions regarding mass points or the existence of a strictly positive density.} Further, since \eqref{eq:testableimplication} implies that $F_{X|Z=z}(X) \sim Unif[0,1]$ we have that $V^*|Z=z \sim Unif[0,1]$ for any $z$, and hence $Z \indep V^*$. Note that Assumptions \ref{ass:mon} and \ref{ass:Vcdf} are satisfied with $V=V^*$ given \eqref{eq:testableimplication}.
    
    Next, define $\theta^* = F_{Y|X,V^*}(Y)$ and let $U^* = (\theta^*,V^*)$. If we denote a generic $u$ by the pair $(p,v)$ and define $g^*(x,u) = Q_{Y|X=x,V^*=v}(p)$, we have that $Y=g^*(X,U^*) = Q_{Y|X,V^*}(F_{Y|X,V^*}(Y))$ with probability one. We have now proved both parts of Equation \eqref{model:tri_model} hold with these definitions.  
    
    It only remains to be seen that $Z \indep U^*|V^*$ so that Assumption \ref{ass:exogeneity} holds with these definitions (since we have already seen that $Z \indep V^*$, this is sufficient for $Z \indep (U^*,V^*)$). Since the $V^*$ component of $U^*=(\theta^*,V^*)$ is degenerate conditional on $V^*$, we only need to check whether $Z \indep \theta^* | V^*$, i.e. for any measurable set $A$, $p \in [0,1]$, and $v \in \mathcal{P}$:
    $$ P(\theta^* \le p|Z \in A,V^*=v) = P(\theta^* \le p|V^*=v).$$  
    Without loss of generality, $\theta^*|X,Z \sim Unif[0,1]$ and hence $\theta^* \indep (X,Z)$.\footnote{This is immediate if $Y$ has no mass points. Otherwise, e.g. if $Y$ is discretely distributed, the random variable $\theta^*=F_{Y|X,V^*}(Y)$ will itself have mass-points at the support points of $Y$. However, one can ``smooth out'' the mass points in $\theta^*$ by re-defining $\theta^*$ to have the distribution $\theta^*|Y=y \sim Unif[\lim_{t \uparrow y} F_{Y|X,V^*}(t),F_{Y|X,V^*}(y)]$. With this definition $\theta^*|X,Z \sim Unif[0,1]$ and it still holds that $Y=g^*(X,U^*) = Q_{Y|X,V^*}(F_{Y|X,V^*}(Y))$ with probability one, as we show in auxillary Lemma \ref{lemmaquantilesmoothed}.   
    } This proves useful, as given Condition \ref{ass:onetoone}:
    \begin{align*}
    P(\theta^* \le p|Z\in A,V^*=v)&=P(\theta^* \le p|X\in h^*_v(A),V^*=v) = P(\theta^* \le p|V^*=v)
    \end{align*}
    We thus have that $Z \indep U^*|V^*$, completing the proof of Proposition \ref{prop:nofurther}. 
\end{proof}

\subsection{Auxillary supporting results}

\begin{lemma} \label{lemmaquantile}
    Let $X$ be a random variable and $Z$ a random vector defined on a common probability space. Then 
    $$X = Q_{X|Z}(F_{X|Z}(X)) \quad a.s.$$
\end{lemma}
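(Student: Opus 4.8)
The plan is to reduce the conditional statement to the classical unconditional fact that $X = Q_X(F_X(X))$ almost surely for a single random variable, and then transfer it to conditional distributions by integrating over $Z$. Throughout I use the convention $Q_A(u) = \inf\{a : F_A(a) \ge u\}$ already adopted in the paper. The backbone of the unconditional argument is the ``Galois'' property of the generalized inverse: for every $u \in (0,1]$ and $x \in \mathbb{R}$, $Q_X(u) \le x$ if and only if $u \le F_X(x)$. The ``if'' direction is immediate from the definition of the infimum; the ``only if'' direction uses right-continuity of $F_X$, which guarantees $F_X(Q_X(u)) \ge u$ (the infimum is attained in this sense), so that $Q_X(u)\le x$ and monotonicity give $F_X(x) \ge F_X(Q_X(u)) \ge u$. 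Taking $u = F_X(x)$ then yields $Q_X(F_X(x)) \le x$ for \emph{every} $x$, hence $Q_X(F_X(X)) \le X$ surely.

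The reverse inequality holds only almost surely, and this is the step that needs some care with the flat stretches of $F_X$. Let $B = \{x : Q_X(F_X(x)) < x\}$; I would show $\mathbb{P}(X \in B) = 0$. If $x \in B$, choose a rational $q$ with $Q_X(F_X(x)) < q < x$; the Galois property gives $F_X(x) \le F_X(q)$, while $q < x$ and monotonicity give $F_X(q) \le F_X(x)$, so $F_X$ is constant on $[q,x]$. Consequently $B \subseteq \bigcup_{q \in \mathbb{Q}} B_q$ with $B_q := \{x > q : F_X(x) = F_X(q)\}$. Each $B_q$ is an interval with left endpoint $q$ on which $F_X$ is constant equal to $F_X(q)$, so $F_X$ is left-continuous at $s_q := \sup B_q$ with $F_X(s_q^-) = F_X(q)$; a short case check (according to whether $s_q$ is finite, belongs to $B_q$, or is a jump point) shows $\mathbb{P}(X \in B_q) \le F_X(s_q) - F_X(q) = 0$ in every case. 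Summing the countably many null sets gives $\mathbb{P}(X \in B) = 0$, and together with the sure inequality this proves $X = Q_X(F_X(X))$ a.s.

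For the conditional statement I would fix a regular conditional distribution $z \mapsto P_{X\mid Z=z}$ (which exists since $X$ is real-valued), so that $(z,x)\mapsto F_{X\mid Z=z}(x)$ and $(z,u)\mapsto Q_{X\mid Z=z}(u)$ are jointly measurable; this makes the event $E = \{Q_{X\mid Z}(F_{X\mid Z}(X)) \ne X\}$ measurable. Applying the unconditional result to the law $P_{X\mid Z=z}$ gives, for every $z$, $P_{X\mid Z=z}\big(\{x : Q_{X\mid Z=z}(F_{X\mid Z=z}(x)) \ne x\}\big) = 0$, and the tower property then yields $\mathbb{P}(E) = \int P_{X\mid Z=z}\big(\{x : Q_{X\mid Z=z}(F_{X\mid Z=z}(x)) \ne x\}\big)\, dP_Z(z) = 0$, which is the claim. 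The main obstacle is not any single computation but the bookkeeping around degenerate cases: the flat-and-jump analysis of $F_X$ in the second step, and the joint measurability of the conditional quantile map (and the choice of a regular version of the conditional law) in the last step, so the write-up should be explicit about both.
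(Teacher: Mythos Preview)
Your argument is correct and proceeds along a genuinely different route from the paper's. The paper establishes the unconditional identity $A=Q_A(F_A(A))$ by introducing the auxiliary variable $\tilde A:=Q_A(U)$ with $U\sim\mathcal U[0,1]$, proving the identity first for $\tilde A$ via the inequality $F_A(Q_A(U))\ge U$, and then transferring it to $A$ through the equality-in-law $A\stackrel{d}{=}\tilde A$. You instead work directly with the Galois correspondence $Q_X(u)\le x\iff u\le F_X(x)$: this gives the \emph{sure} inequality $Q_X(F_X(x))\le x$ immediately, and you dispose of the exceptional set $\{Q_X(F_X(X))<X\}$ by covering it with countably many flat stretches $B_q$ indexed by rationals, each of which carries zero $F_X$-mass. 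Both reductions to the conditional case are the same (apply the unconditional result to each $F_{X\mid Z=z}$ and integrate), though you are more explicit than the paper about the measurability bookkeeping via a regular conditional distribution. Your approach is arguably more elementary in that it avoids the auxiliary-variable detour; the paper's is slightly slicker in that it sidesteps the explicit flat-interval case analysis. One small quibble: the blanket statement ``$\mathbb P(X\in B_q)\le F_X(s_q)-F_X(q)=0$ in every case'' is not literally correct when $s_q\notin B_q$ is a jump point of $F_X$ (there the right bound is $F_X(s_q^-)-F_X(q)$), but your parenthetical case check already flags this and the conclusion $\mathbb P(X\in B_q)=0$ stands.
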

\begin{proof}
    We begin by establishing the simpler claim that
    \begin{equation} \label{eq:unconditionalquantile}
        A = Q_{A}(F_A(A))
    \end{equation}
    almost surely, for any random variable $A$. This result is sometimes used in the literature, but since we cannot find a reference for it, we provide a proof for completeness.  
From Proposition 2 item (2) of \citet{Embrechts_al2013}, $A$ has the same distribution as $\tilde{A}\equiv Q_{A}^{}(U)$ where $U \sim \mathcal{U}_{[0,1]}.$ From the definition of $\tilde{A}$ we can write $$Q_{A}^{}(F_{A}(\tilde{A}))=Q_{A}^{}(F_{A}(Q_{A}^{}(U))).$$
From the definition of $Q_{A}^{}$, we have $F_{A}(Q_{A}^{}(U)) \geq U$. Therefore, since the quantile function is nondecreasing, $Q_{A}^{}(F_{A}(Q_{A}^{}(U))) \geq Q_{A}^{}(U),$ which implies $Q_{A}^{}(F_{A}(\tilde{A})) \geq \tilde{A}$ with probability one.

On the other hand, since $F_{\tilde{A}}(y) \leq F_{\tilde{A}}(y)$, it follows from the definition of the quantile function that for any $y$, $Q_{\tilde{A}}^{}(F_{\tilde{A}}(y)) = \inf\{t: F_{\tilde{A}}(t) \ge F_{\tilde{A}}(y)\} \leq y$. This implies that $Q_{\tilde{A}}^{}(F_{\tilde{A}}(\tilde{A})) \leq \tilde{A}$ with probability one. Finally, since $A$ and $\tilde{A}$ have the same distribution and quantile functions, we then also have that $Q_{A}^{}(F_{A}(\tilde{A})) \leq \tilde{A}.$

As a result, we have both that $Q_{A}^{}(F_{A}(\tilde{A})) \geq \tilde{A}.$ and that $Q_{A}^{}(F_{A}(\tilde{A})) \leq \tilde{A}$, so $Q_{A}^{}(F_{A}(\tilde{A})) = \tilde{A}$ with probability one. It only remains to be seen that this implies \eqref{eq:unconditionalquantile}, that $Q_{A}^{}(F_{A}(A) = A$, with $A$ replacing $\tilde{A}$.

Now, define $S=\left\{y \in \mathbb R: Q_{A}^{}(F_{A}(y)) = y \right\}$. Since $F_{\tilde{A}}=F_{A}$, we have $\mathbb P(A \in S)=\mathbb P(\tilde{A} \in S)$ for any Borel set $S$. Thus
$$\mathbb P(A \in S)=\mathbb P(\tilde{A} \in S)=\mathbb P\left(Q_{A}^{}(F_{A}(\tilde{A})) = \tilde{A}\right)=1.$$ Hence, 
$$1=\mathbb P(A \in S)=\mathbb P\left(Q_{A}^{}(F_{A}(A))=A\right).$$
establishing \eqref{eq:unconditionalquantile}.

To move to the conditional case, now let us consider for any value $z$ the random variable $A$ having distribution function $F_{X|Z=z}$. We have by \eqref{eq:unconditionalquantile} that
    $$\mathbb P\left(A = Q_{X|Z=z}(F_{X|Z=z}(A))\right)=1$$
    which given $A \sim X|Z=z$ is the same as 
    \begin{equation} \label{eq:equalszeroconditionalonz}
        \mathbb P\left(X = Q_{X|Z=z}(F_{X|Z=z}(X))|Z=z\right)=1
    \end{equation}
    By the law of total probability, we have that
    \begin{align*}
    \mathbb{P}\left(X = Q_{X|Z}(F_{X|Z}(X))\right) &= \int \mathbb{P}\left(X = Q_{X|Z}(F_{X|Z}(X))|Z=z\right) \cdot dF_Z(z)\\
    &=\int \mathbb{P}\left(X = Q_{X|Z=z}(F_{X|Z=z}(X))|Z=z\right) \cdot dF_Z(z)
    &=\int 1 \cdot dF_Z(z) = 1
    \end{align*}
    where we've used \eqref{eq:equalszeroconditionalonz} in the last line.
\end{proof}

\begin{lemma} \label{lemmaquantilesmoothed}
    Let $Y$ be a random variable and $X$ a random vector defined on a common probability space. Define random variable $\theta$ to have the distribution $\theta|(Y,X) \sim Unif[\lim_{s \uparrow Y} F_{Y|X}(s), F_{Y|X}(Y)]$. Then 
    $$\theta|X \sim Unif[0,1]$$ and
    $$Y = Q_{Y|X}(\theta) \quad a.s.$$
\end{lemma}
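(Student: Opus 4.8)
The plan is to reduce Lemma \ref{lemmaquantilesmoothed} to the unconditional ``distributional transform'' of a single random variable and then lift the conclusion back to the conditional setting, exactly as the proof of Lemma \ref{lemmaquantile} passes from \eqref{eq:unconditionalquantile} to its conditional version.

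First I would fix a regular conditional distribution for $Y$ given $X$ and observe that, since both assertions depend only on the joint law of $(Y,X,\theta)$, I may work on a suitable extension of the probability space and take
$$\theta = F_{Y|X}(Y-) + V\bigl(F_{Y|X}(Y) - F_{Y|X}(Y-)\bigr), \qquad F_{Y|X}(Y-) := \lim_{s \uparrow Y} F_{Y|X}(s),$$
with $V \sim \mathcal{U}_{[0,1]}$ independent of $(Y,X)$; this $\theta$ has precisely the conditional law prescribed in the statement. Because $(x,y) \mapsto F_{Y|X=x}(y)$ and $(x,y)\mapsto F_{Y|X=x}(y-)$ are jointly measurable, $\theta$ and $Q_{Y|X}(\theta)$ are genuine random variables, and conditionally on $X=x$ the pair $(Y,\theta)$ has the law of $(A,\theta_A)$, where $A$ has CDF $F_A := F_{Y|X=x}$ and $\theta_A := F_A(A-) + V\bigl(F_A(A)-F_A(A-)\bigr)$ with $V\sim\mathcal{U}_{[0,1]}$ independent of $A$.

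The core step is then the unconditional fact that for any random variable $A$: (i) $\theta_A \sim \mathcal{U}_{[0,1]}$, and (ii) $Q_A(\theta_A) = A$ almost surely. For (i) I would condition on $A$: at a continuity point $a$ of $F_A$ one has $\theta_A = F_A(a)$, while at an atom $a$ (where $\Delta := F_A(a)-F_A(a-) > 0$) one has $\mathbb P(\theta_A \le u \mid A=a) = \min\{1,\max\{0,(u-F_A(a-))/\Delta\}\}$; integrating over the law of $A$ and accounting for the at most countably many atoms of $F_A$ that straddle a given $u$ yields $\mathbb P(\theta_A \le u) = u$ for all $u\in[0,1]$ — this is the classical distributional transform (see, e.g., \citet{Embrechts_al2013} for the closely related quantile transform). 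For (ii) I would use that $F_A(A-) \le \theta_A \le F_A(A)$ always: if $A=a$ is an atom then $\theta_A > F_A(a-) \ge F_A(a')$ for every $a'<a$ with probability one, so from $Q_A(u) = \inf\{a': F_A(a')\ge u\}$ together with $\theta_A \le F_A(a)$ we get $Q_A(\theta_A) = a$; if $a$ is a continuity point then $\theta_A = F_A(A)$ and $Q_A(\theta_A) = Q_A(F_A(A)) = A$ almost surely by \eqref{eq:unconditionalquantile}.

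Finally I would assemble the pieces. Applying (i) with $F_A = F_{Y|X=x}$ gives that $\theta$ has conditional law $\mathcal{U}_{[0,1]}$ given $X=x$ for $F_X$-almost every $x$, i.e.\ $\theta \mid X \sim \mathcal{U}_{[0,1]}$; applying (ii) gives $\mathbb P\bigl(Y = Q_{Y|X=x}(\theta)\mid X=x\bigr) = \mathbb P(A = Q_A(\theta_A)) = 1$ for $F_X$-almost every $x$, and then the law of total probability — as in the last display of the proof of Lemma \ref{lemmaquantile} — yields $\mathbb P\bigl(Y = Q_{Y|X}(\theta)\bigr) = \int \mathbb P\bigl(Y = Q_{Y|X=x}(\theta)\mid X=x\bigr)\, dF_X(x) = 1$. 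The probabilistic content is light; I expect the main obstacle to be the measure-theoretic bookkeeping of the first step — ensuring that the regular conditional distribution exists and that $F_{Y|X=x}$, its left limit, and $Q_{Y|X=x}$ are jointly measurable in $(x,\cdot)$, so that the ``conditionally on $X=x$'' reductions and the integral above are legitimate — together with the easy-to-overlook point that substituting the explicit $V$-representation for $\theta$ does not change the joint law of $(Y,X,\theta)$, so nothing is lost in the reduction.
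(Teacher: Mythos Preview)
Your proposal is correct and follows essentially the same route as the paper: reduce to the unconditional distributional-transform statement for a single random variable $A$, establish $\theta_A\sim\mathcal U[0,1]$ by conditioning on $A$ and integrating, establish $Q_A(\theta_A)=A$ via Lemma \ref{lemmaquantile}, and then lift both conclusions back to the conditional setting by the law of total probability. The only cosmetic differences are that you introduce the explicit $V$-representation for $\theta$ (which the paper leaves implicit) and split the quantile-identity step into atom/continuity cases, whereas the paper handles both at once by noting that $Q_A(\cdot)$ is constant on $[F_A(a^-),F_A(a)]$ so $Q_A(\theta_A)=Q_A(F_A(A))$ directly.
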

\begin{proof}
    As in Lemma \ref{lemmaquantile}, we begin with the simpler case in which there is no conditioning random vector $X$. To this end, consider a random variable $A$ and define $\theta$ by $\theta|A \sim Unif[\lim_{s \uparrow A} F_{A}(s), F_{A}(A)]$. 

    Let us introduce the shorthand $F_{A}(a^-):=\lim_{s \uparrow a} F_{A}(s)$ for any $a$. The limit $F_{A}(a^-)$ exists for any $a$ since a CDF can only exhibit countable discontinuities.
    
    First observe that for any $a$ and $\tilde{a} \in [F_A(a^-), F_{A}(a)]$ we have that $Q_A(\tilde{a})=Q_A(F_{A}(a)$) since $Q_A(\cdot)$ is constant on the range $[\lim_{s \uparrow a} F_{A}(s), F_{A}(a)]$ (when
    it is not a singleton). Since $\theta \in [F_A(A^-), F_{A}(A)]$ with probability one, we then have that $Q_A(\theta) = Q_A(F_A(A))$ with probability one. By Lemma \ref{lemmaquantile}, $Q_A(F_A(A))=A$ with probability one and combining we thus have $Q_A(\theta)=A$ with probability one. 
    
    Now we turn to showing that $\theta \sim Unif[0,1]$,\footnote{We thank Eric Mbakop for useful conversations on the direction of this part of the proof.} i.e. that
    $$P(\theta \le t) = \begin{cases}
    0 & \textrm { if } t < 0\\
    t & \textrm { if } t \in [0,1]\\
    1 & \textrm { if } t > 1 \end{cases}$$
    By the law of iterated expectations, 
    \begin{equation} \label{eq:cdfthetalie0}
    P(\theta \le t) = \int P(\theta \le t|A=a) dF_A(a) 
    \end{equation}
    and given the definition of $\theta$
    $$P(\theta \le t|A=a) = \begin{cases}
    0 & \textrm { if } t < F_A(a^-) \\
    \frac{t-F_A(a^-)}{F_A(a)-F_A(a^-)} & \textrm { if } t \in [F_A(a^-),F_A(a) ]\\
    1 & \textrm { if } t > F_A(a)  \end{cases}$$

    \noindent Note that this implies $F_A(a^-) \ge 0$ and $F_A(a) \le 1$ for any $a$, the above immediately yields that $P(\theta \le t) = 0$ for $t<0$ and $P(\theta \le t) = 1$ for $t>1$.
    
    Accordingly, let us now focus on any $t \in [0,1]$. We can simplify \eqref{eq:cdfthetalie0} to
    \begin{equation} \label{eq:cdfthetalie}
        P(\theta \le t) = \int_{a:F_A(a^-) \le t} P(\theta \le t|A=a) dF_A(a)  
    \end{equation}
    where the second equality follows because $\theta \ge F_A(a^-)$ with probability one conditional on $A=a$, so $P(\theta \le t|A=a)=0$ for any $a$ such that $F_A(a^-) > t$.
    
    Define $a^*(t) = \sup\{a: F_A(a^-) \le t\}$. Note that we cannot have that $F_A(a^*(t)) < t$, since by right continuity of $F_A(\cdot)$, $F_A(a^*(t)) = \lim_{s \downarrow a^*(t)} F_A(s)$ and thus if $F_A(s) < t$ for some $s > a^*(t)$, it would violate the definition of $a^*(t)$ as a supremum. We are thus left with two cases for a given $t$: either $F_A(a^*(t)) =t$ or $F_A(a^*(t)) > t$.
    
    Consider the first case, in which $F_A(a^*(t)) =t$. Then the condition $a \le a^*(t)$ is equivalent to $F_{A}(a^-) \le t$, so by \eqref{eq:cdfthetalie} then
    $$P(\theta \le t) = \int_{a \le a^*(t)} P(\theta \le t|A=a) dF_A(a) = \int_{a \le a^*(t)} 1 dF_A(a) = P(A \le a^*(t)) = F_A(a^*(t))=t$$
    where the second equality uses that $\theta \le F_A(a)$ with probability one conditional on $A=a$, and $F_A(a^*(t)) =t$ implies that $F_A(a) \le t$ for all $a \le a^*(t)$, so $P(\theta \le t|A=a)=1$ for all $a \le a^*(t)$. 

    Now let's turn to the second case, in which $F_A(a^*(t)) > t$. This implies that there is a point mass at $a^*(t)$, because if there were no point-mass at $a^*(t)$ we would have $F_A(a^*(t)^-)=F_A(a^*(t))$ and therefore $F_A(a^*(t)^-) > t$, contradicting the definition of $a^*(t)$. Eq. $\eqref{eq:cdfthetalie}$ thus becomes:
    \begin{align*}
    &P(\theta \le t) = P(A=a^*(t))\cdot P(\theta \le t|A=a^*(t))+\int_{a < a^*(t)} P(\theta \le t|A=a) dF_A(a) \\
    &= P(A=a^*(t))\cdot P(\theta \le t|A=a^*(t))+\int_{a < a^*(t)} 1 dF_A(a)\\
    &= P(A=a^*(t))\cdot \frac{t-F_{A}(a^*(t)^-)}{F_A(a^*(t))-F_{A}(a^*(t)^-)}+F_{A}(a^*(t)^-)=t
    \end{align*}
    where the second equality uses that for any $a < a^*(t)$, there must exist an $s>a$ such that $F_A(s^-)\le t$ (by the definition of $a^*(t)$), and thus $F_A(a)\le t$. Therefore $P(\theta \le t|A=a)=1$ for any such $a$. The third equality uses \eqref{eq:cdfthetalie0} and that $P(A=a^*(t)) = F_A(a^*(t))-F_A(a^*(t)^-)>0$.\\
    
    \noindent As with Lemma \ref{lemmaquantile}, we now move to the conditional case by replacing the distribution $F_A$ with a conditional distribution function, in this case $F_{Y|X=x}$ for a given $x$. We now define $\theta$ to have the distribution
    $$\theta|(Y=y,X=x) \sim Unif[\lim_{s \uparrow y} F_{Y|X=x}(s), F_{Y|X=x}(y)]$$
    and the above result then establishes that $\theta|X=x \sim Unif[0,1]$. Thus $\theta|X \sim Unif[0,1]$ with probability one and $\theta \indep X$.

    That $Q_{Y|X}(\theta)=Y$ with probability one follows similarly to the unconditional case. Since $Q_{Y|X=x}(\cdot)$ is constant on the range $[\lim_{s \uparrow y} F_{Y|X=x}(s), F_{Y|X=x}(y)]$, it follows that $Q_{Y|X=x}(\tilde{y})=Q_{Y|X=x}(F_{Y|X=x}(y)$ for any $\tilde{y} \in [\lim_{s \uparrow y} F_{Y|X=x}(s), F_{Y|X=x}(y)]$. We then have that $Q_{Y|X=x}(\theta) = Q_{Y|X=x}(F_{Y|X=x}(Y))$ with probability one, and thus $P(Q_{Y|X}(\theta)=Y|X=x)=1$. Then by the law of total probability $P(Q_{Y|X}(\theta)=Y)=1$, as in Lemma \ref{lemmaquantile}.
    
    \end{proof}

\end{appendices}

\end{document}